\newtheorem{theorem}{Theorem}
\newtheorem{lemma}[theorem]{Lemma}
\newtheorem{example}[theorem]{Example}
\newtheorem{remark}[theorem]{Remark}
\newtheorem{definition}[theorem]{Definition}
\title{Randomized Gathering of Asynchronous Mobile Robots}
\author[1]{Debasish Pattanayak}
\author[2]{John Augustine}
\author[1]{Partha Sarathi Mandal}
\affil[1]{Department of Mathematics, Indian Institute of Technology Guwahati, India}
\affil[2]{Department of Computer Science \& Engineering, Indian Institute of Technology Madras, India}
\date{}
\begin{document}
\maketitle
\begin{abstract}
	This paper revisits the widely researched \textit{gathering} problem for two robots in a scenario which allows randomization in the asynchronous scheduling model. The scheduler is considered to be the adversary which determines the activation schedule of the robots.  The adversary comes in two flavors, namely, oblivious and adaptive, based on the knowledge of the outcome of random bits. The robots follow \textit{wait-look-compute-move} cycle.
	In this paper, we classify the problems based on the capability of the adversary to control the parameters such as wait time, computation delay and the speed of robots and check the feasibility of gathering in terms of adversarial knowledge and capabilities. The main contributions include the possibility of gathering for an oblivious adversary with (i) zero computation delay; (ii) the sum of wait time and computation delay is more than a positive value. We complement the possibilities with an impossibility.
	We show that it is impossible for the robots to gather against an adaptive adversary with non-negative wait time and non-negative computation delay. Finally, we also extend our algorithm for multiple robots with merging.
\end{abstract}

\section{Introduction}
\subsection{Backgorund and Motivation}
Recently, it has piqued the interest of researchers to use small and simple robots to achieve tasks which may seem complicated for a single robot to perform.
These tiny robots, otherwise known as \textit{swarm robots}, have various advantages, namely, robustness and scalability.
Some common problems which has been addressed in the literature include
\textit{gathering}~\cite{AgmonP06,BhagatM17,Bouzid0T13,BramasT15,ChaudhuriM10,CieliebakP02,IzumiIKO13},
\textit{pattern formation}~\cite{0001FSY15,FlocchiniPSW99,FujinagaOKY10,SugiharaS96}, \textit{scattering}~\cite{BramasT17,IzumiKPT18} and \textit{flocking}~\cite{CanepaDIP16,Chaudhuri19}.
The problem of \textit{gathering} had been investigated in the presence of byzantine faults~\cite{AugerBCTU13,DefagoP0MPP16} and crash faults~\cite{Bhagat201650,BhagatM17,PattanayakMRM19}.

The widely accepted \textit{weak robots} model for mobile robots was introduced by Flocchini et al.~\cite{FlocchiniPSW99}.
The robots are generally considered to be \textit{homogeneous}, i.e., execute the same algorithm; \textit{oblivious}, i.e., do not have any knowledge of past computations; \textit{anonymous}, i.e., do not have any identifiers. No message exchange happens among the robots, hence, \textit{silent}. The robots are dimensionless \textit{point} robots.
Each robot operates in \textit{wait-look-compute-move} cycles.
\emph{Gathering} is a well-known problem that requires the robots to meet at a point which is not specified a priori.
Gathering for two robots is also known as \textit{Rendezvous} problem for two robots.
Likewise, the gathering problem can be considered to be a point formation problem.
The solution to the Rendezvous problem depends on the level of synchrony of the scheduler. There are three basic types of schedulers considered in the literature, namely, fully-synchronous (\emph{FSYNC}), semi-synchronous (\emph{SSYNC}) and asynchronous (\emph{ASYNC}).
In \textit{SSYNC}, the global time is divided into discrete rounds and a subset of robots are activated in each round by the scheduler.
\textit{FSYNC} can be considered as a special case of \textit{SSYNC}, where all the robots are activated in each round. In \textit{SSYNC}, the look states of all activated robots are aligned.

\subsection{Related Works}
The gathering problem for two robots is trivial in \emph{FSYNC} model.
It has been proved that gathering two robots is impossible with a deterministic algorithm in the \emph{SSYNC} model~\cite{SuzukiY99}.
Gathering is solved in the \emph{SSYNC} model with randomization by Izumi et al.~\cite{IzumiIKO13}, which takes a constant number of rounds in expectation with a local-weak multiplicity detection and scattered initial configuration -- all robots occupy distinct initial positions. To achieve the constant number of rounds, they assume that the scheduler is bounded, i.e., all robots activate at least once for some constant number of activations of any robot.

Lights were introduced by Das et al.~\cite{0001FPSY16} as external persistent memories to expand capabilities of robots.
In robots with lights model, Rendezvous has been solved with two color lights, which is optimal~\cite{HeribanDT18} in the \textit{ASYNC} model with non-rigid motion.
Okumura et al.~\cite{OkumuraWD18} further categorized the scheduler into multiple types including rigid and non-rigid motion of robots.
The fault-tolerant gathering of robots has been solved in the presence of one fault~\cite{AgmonP06} to multiple faults~\cite{BhagatM17,PattanayakMRM19}. Pattanayak et al.~\cite{PattanayakMRM19} also introduced the \textit{ASYNC}$_{IC}$ model -- also known as $LC$-atomic \textit{ASYNC} by Okumura et al.~\cite{OkumuraWD18}.
Randomization has also been used in~\cite{BramasT16,CanepaDIP16} to elect a leader and in~\cite{IzumiKPT18} for choosing a direction of movement in scattering.
Characterization of a continuous adversary in a discrete scenario has been addressed by Bampas et al.~\cite{BampasBCILPT19} for the Rendezvous problem of two mobile agents in graphs.

In this paper, we approach the problem of gathering robots in multiple variations of the \textit{ASYNC} model. This paper explores two different possibilities for the notion of an adversary for the randomized gathering of robots.

\subsection{Our Contribution}
We approach the problem to determine the tradeoff between randomization and asynchrony. We define the particulars of adversarial control and characterize them according to the access and knowledge of parameters such as wait time, computation delay, the speed of robots, and the outcome of random bits. We use the terms possible, improbable, and impossible to denote the results in various models. If the probability is non-zero, we say it is possible. Here improbable means that the probability of success is zero, but the set of successful outcomes is non-empty. For example, the probability of choosing 1/2 in the interval $[0,1]$ is zero, but it is possible to choose 1/2. If the set of successful outcomes is empty, then we say that it is impossible. For example, choosing 2 in the interval $[0,1]$ is impossible.

\begin{itemize}
    \item It is possible to gather two robots with non-zero wait time and zero computation delay for an oblivious adversary, where the algorithm knows the speed ratio of two robots.
    \item It is improbable for two robots to gather if the algorithm does not know the speed ratio given zero wait time and zero computation delay. 
    \item It is possible for two robots to gather if one robot has zero wait time and zero computation delay.
    \item It is also possible to gather two robots with non-negative wait time and non-negative computation delay if the sum of wait time and computation delay in any cycle is always greater than a positive value, $\tau$. The estimated number of total looks by both of the robots for gathering is $18(\log_2(\delta/\tau) + 1)$ where $\delta$ is the initial distance between them.
    \item For the adaptive adversary, we show that it is impossible for two robots to gather for non-zero wait time and computation delay.
    \item Finally, we also extend the algorithm for gathering multiple robots with merging.
\end{itemize}
We summarize the findings in Table~\ref{tab:summary}.

\begin{table}[h]\centering
	\caption{Summary of results for gathering two robots}\label{tab:summary}
	\begin{tabular}{|c|c|c|c|c|}\hline
		\textbf{Result}                     & \textbf{\shortstack{Wait                                                     \\Time} ($\mathcal{W}$)} & \textbf{\shortstack{Computation\\ Delay}($\mathcal{C}$)} & \textbf{\shortstack{Speed\\Ratio}\((\alpha)\)} & \textbf{Gathering}\\\hline
		\multicolumn{5}{c}{\bf Oblivious Adversary}                                                                        \\\hline
		Theorem~\ref{thm:SameSpeedAsyncic}  & $ \mathcal{W}\geq 0$                           & \(\mathcal{C}=0\)   & 1        & Possible   \\\hline
		Theorem~\ref{thm:AsyncicKnownAlpha} & $ \mathcal{W}\geq 0$                           & \(\mathcal{C}=0\)   & Known    & Possible   \\\hline
		Theorem~\ref{thm:UnknownAlpha}      & $\mathcal{W}= 0$                               & \(\mathcal{C}=0\)   & Unknown  & Improbable \\\hline
		Theorem~\ref{thm:R1zerozero}        & \shortstack{$\mathcal{W}_1=0$                                                       \\$\mathcal{W}_2\geq0$} & \shortstack{$\mathcal{C}_1=0$\\$\mathcal{C}_2\geq0$}& Known & Possible\\\hline
		Theorem~\ref{thm:minimumTau}        & \multicolumn{2}{c|}{$\mathcal{W} +\mathcal{C}> \tau$} & 1           & Possible              \\\hline
		                                    & $\mathcal{W} \geq 0$                           & $\mathcal{C}\geq 0$ & 1        & ?          \\\hline
		\multicolumn{5}{c}{\bf Adaptive Adversary}                                                                         \\\hline
		Theorem~\ref{thm:adaptiveasync}     & $\mathcal{W}\geq 0$                            & $\mathcal{C}\geq 0$ & 1        & Impossible \\\hline
	\end{tabular}
\end{table}
\subsection{Paper Organization}
The rest of the paper is organized as the following. In Section~\ref{sec:prelim}, we describe the model and notations. Then we describe the results of gathering for an oblivious adversary in Section~\ref{sec:oblivious}, for various models, like zero computation delay, heterogeneous speed of the robots and analyze random bit complexity. In Section~\ref{sec:adaptive}, we show the impossibility of gathering for an adaptive adversary.
In Section~\ref{sec:multirobot}, we explore the gathering of multiple robots with merging before concluding in Section~\ref{sec:conclusion}.
\section{Preliminaries}\label{sec:prelim}
\subsection{Model}
We assume the robots to be anonymous, oblivious, silent represented as points on the Euclidean plane. The robots follow the \emph{wait-look-compute-move} cycle.
When the robots are in \textit{wait} state, the robots are idle. In the \textit{look} state, the robots obtain a snapshot of the surrounding and in the \textit{compute} state they compute a destination based on the snapshot obtained. Finally, in the \textit{move} state, the robots move towards the destination. The robots do not have any agreement on the coordinate system.
There are two types of movement considered in the literature, namely, \emph{rigid} movement and \emph{non-rigid} movement.
In rigid movement, the robot reaches the destination in the same activation cycle.
In non-rigid movement, the robot moves at least a predefined distance towards the destination in an activation cycle.
In this paper, we consider the rigid movement of robots.
It is easy to observe that the non-rigid movement behaves as a rigid movement if the distance to travel becomes less than the predefined distance.

The robots have random bits with which they decide the destination as per the algorithm.
To characterize the problems, we consider the scheduler to be the \emph{adversary}.
The adversary knows the algorithm in advance, but may or may not have access to the random bits used by the algorithm in each cycle~\cite{MotwaniR95}.
Based on the above, the adversary is categorized into two types, namely, \emph{oblivious}: does not know the outcome of random bits and \emph{adaptive}: knows the outcome of \emph{past} random bits. Note that, the word \textit{oblivious} is used for denoting robots without memory of past computations and adversary without knowledge of the outcome of random bits.

We consider the general \textit{ASYNC} model for the scheduler, where each robot independently executes the \emph{wait-look-compute-move} cycle.
We consider various models which make the adversary control some aspect of a cycle.
When a robot \(R_1\) looks, it obtains the position of robot \(R_2\), which may or may not be moving.
If \(R_2\) is moving, then the position of \(R_2\) obtained at a particular instant of time that corresponds to the distance observed by \(R_1\).
So the look instant\(\mathcal{L}\) divides the time a robot remains stationary in a cycle into two parts where the former is wait time, \(\mathcal{W}\), and the latter is computation delay, \(\mathcal{C}\).
Since until the look happens and the robot knows the position of the other robot, it can be considered \textit{idle}, and after that, it is doing \textit{computation} or executing the algorithm. Fig.~\ref{fig:asyncDescription} shows the division of the look state of the \textit{ASYNC} into wait and compute.

\begin{figure}[h]
    \centering
    \includegraphics[width=\linewidth]{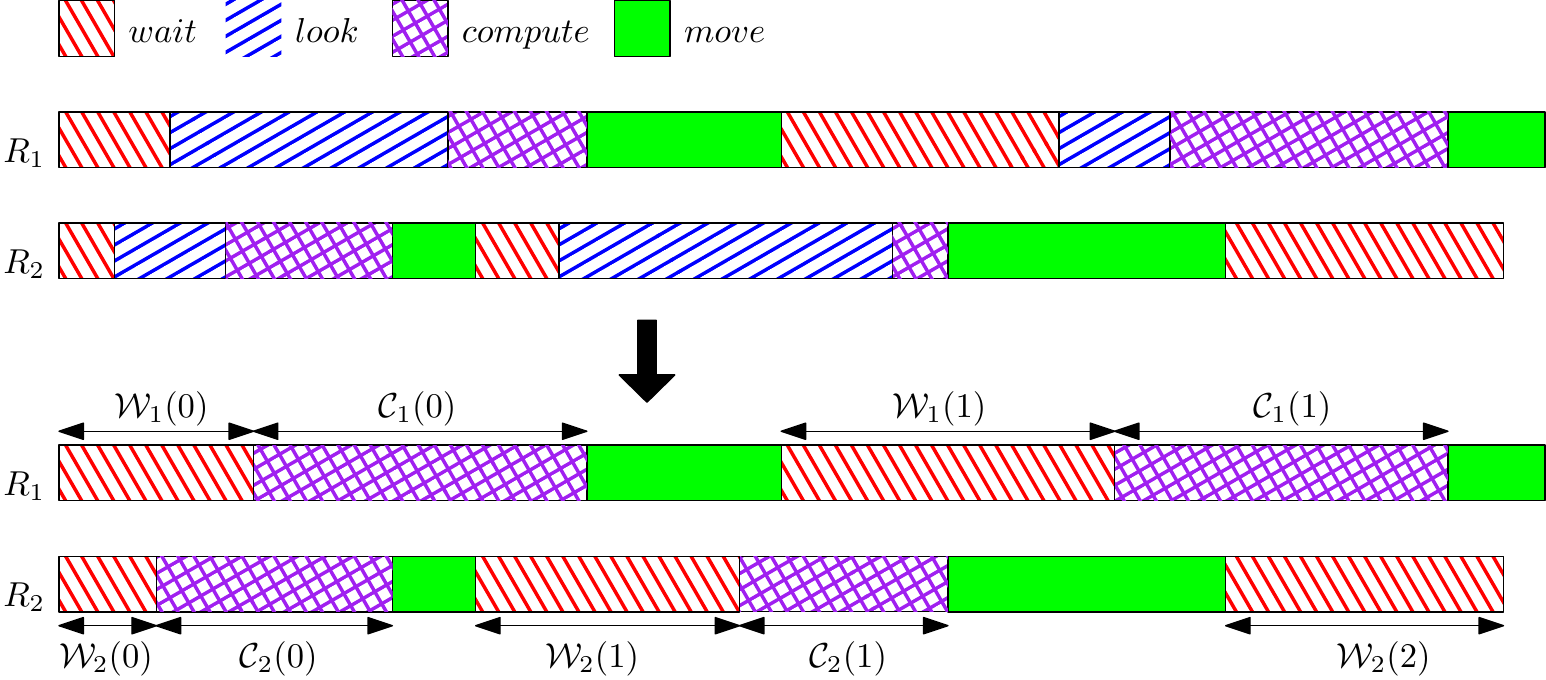}
    \caption{The top figure shows the \textit{wait-look-compute-move} steps and the bottom figure shows the division of \textit{look} state into wait time and computation delay}
    \label{fig:asyncDescription}
\end{figure}

The destination is decided by the algorithm based on the outcome of random bits and the position of the other robot.
The time required by a robot to move to its destination depends on the speed of the robot. The speeds of the robots may be the same or different. We assume that the speed of a robot remains the same throughout the execution of the algorithm.
We classify the problems based on the capabilities of the adversary to control wait time, computation delay, and the speed of robots.
When a robot looks and finds the other robot at its position, it decides that it has gathered. We assume that the robot which has decided that it has gathered stops moving thenceforth.

\subsection{Notations}
We use the following notations throughout the paper.
\begin{itemize}
    \item Time periods are denoted as $\mathcal{W}$ for wait time and $\mathcal{C}$ for computation delay.
    \item To distinguish between the two robots' wait time and computation delay, we use $\mathcal{W}_j$ and $\mathcal{C}_j$ for robot $R_j$, where $j \in \{1,2\}$.
    \item The sequence of wait time periods are denoted as $\{\mathcal{W}_j(0), \mathcal{W}_j(1), \mathcal{W}_j(2), \dots\}$ and computation delay as $\{\mathcal{C}_j(0), \mathcal{C}_j(1), \mathcal{C}_j(2), \dots\}$ corresponding to cycles $\{0, 1,2, \dots\}$ for $R_j$.
    \item We also denote the look instants as $\{\mathcal{L}_j(0), \mathcal{L}_j(1), \mathcal{L}_j(2), \dots\}$ corresponding to cycles $\{0, 1,2, \dots\}$ for $R_j$.
    \item $\delta$ is the initial distance between the robots.
    \item $\tau$ is the lower bound on the sum of wait time and computation delay for a particular robot in a cycle.
    \item $\alpha$ is the ratio of the speed of two robots.
\end{itemize}

\subsection{$\lambda$-class Algorithms}
Gathering two robots in multiple dimensions is equivalent to gathering in one dimension. We formalize the statement using the following results.

\begin{lemma}\label{lem:2dto1d}
    There exists an algorithm $\phi$ which gathers two robots in $\mathbb{R}^d$, if and only if there exists an algorithm $\phi'$ which gathers two robots in one dimension, i.e., the line joining two robots.
\end{lemma}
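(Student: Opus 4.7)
The plan is to prove the two implications separately, with the nontrivial content lying in the direction from 1D to $\mathbb{R}^d$. The forward direction ($\phi \Rightarrow \phi'$) is immediate: if $\phi$ gathers in $\mathbb{R}^d$, then in particular it gathers any initial configuration that happens to be collinear, so the restriction of $\phi$ to 1D inputs yields a suitable $\phi'$. With only two anonymous robots, the only positional information available in a snapshot is the other robot's location, so the robots cannot leave the line joining them, and the restricted execution is well-defined.

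For the converse, given a $\phi'$ that gathers on a line, I would build $\phi$ for $\mathbb{R}^d$ as follows. Whenever robot $R_j$ looks at instant $\mathcal{L}_j(k)$, it observes its own position $p_j$ and the other robot's position $q$; it sets up a local 1D coordinate system with itself at $0$ and the other robot at $\|q-p_j\|$, runs $\phi'$ on this input (passing its random bits through unchanged), obtains a real destination $d^\star$, and moves toward $p_j + d^\star (q-p_j)/\|q-p_j\|$ in $\mathbb{R}^d$. To argue that this faithfully simulates $\phi'$, I would establish the invariant that the line $\ell$ through the two robots never rotates throughout the execution, by induction on the (totally ordered) sequence of look events. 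The base case is the initial configuration, whose line defines $\ell$; for the inductive step, at the moment $R_j$ looks, the other robot's entire past trajectory has lain on $\ell$ by hypothesis, so $R_j$'s observed line coincides with $\ell$, and its chosen move --- being along $(q-p_j)/\|q-p_j\|$ --- stays on $\ell$.

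The main obstacle is that in the \textit{ASYNC} model $R_j$ may look while $R_{3-j}$ is mid-move, so one must verify that the snapshot returns a point on $\ell$ rather than on some rotated line. This is exactly what the invariant delivers: any previous move by $R_{3-j}$ was directed along the line observed at its most recent look, which by induction was $\ell$, so the in-progress displacement is parallel to $\ell$ and the instantaneous position is on $\ell$. Once the invariant is secured, projecting the $\mathbb{R}^d$ execution onto $\ell$ yields a 1D execution of $\phi'$ under the same schedule and random bits, and conversely lifting a 1D execution along the initial segment yields the $\mathbb{R}^d$ execution of $\phi$. Hence gathering succeeds in $\mathbb{R}^d$ if and only if it succeeds on the line joining the two robots.
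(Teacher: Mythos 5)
Your overall strategy is the same as the paper's: identify executions in $\mathbb{R}^d$ with executions on the line joining the two robots, and transfer gathering back and forth. The converse direction ($\phi'\Rightarrow\phi$), which you treat as the nontrivial one, is in fact dispatched by the paper in a single sentence; your line-invariance induction over look events (including the check that a mid-move robot's instantaneous position still lies on $\ell$) is a more careful version of the same idea and is sound.

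The one place you diverge from the paper is the forward direction, and that is where your argument has a soft spot. You assert that because the snapshot contains only the other robot's location, ``the robots cannot leave the line joining them,'' so that $\phi'$ is simply the restriction of $\phi$. But an algorithm in $\mathbb{R}^d$ is free to output a destination off the line --- e.g., a point at a fixed angle to the observed robot in its own local frame --- and nothing in the model forbids this; the destination is then not realizable by any 1D algorithm, and your ``restriction'' is not well-defined. (A symmetry/equivariance argument could be made for $d\geq 3$ if one insists the output be invariant under all local frames consistent with the snapshot, but that is an additional assumption you would need to state, and it is delicate for $d=2$ where a local chirality singles out a side of the line.) The paper avoids this entirely with its key step: it \emph{projects} the destination chosen by $\phi$ onto the line joining the robots and takes that projection as the output of $\phi'$, then argues that if the sequence of destinations gathers the robots in $\mathbb{R}^d$ the projected sequence gathers them on the line. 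Replacing your restriction claim with this projection closes the gap; the rest of your proposal stands.
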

\begin{proof}
    We define a coordinate system for $\mathbb{R}^d$ with mid point of line joining $p_1$ and $p_2$ as the origin and $p_1$ as the positive $x$-axis.
    Suppose the destination decided by the algorithm $\phi$ in $\mathbb{R}^d$ is some point $p$.
    Take the projection of $p$ on the line joining $p_1$ and $p_2$ as $p'$.
    Moving to the position $p'$ in one dimension is the same as moving to $p$ in $\mathbb{R}^d$.
    Now, we can always find a corresponding position on the line for each destination position provided by the algorithm $\phi$.
    If the sequence of destinations gathers the two robots in $\mathbb{R}^d$, then the corresponding projections will also gather the two robots in one dimension. 
    Conversely, if an algorithm $\phi'$ gathers two robots in one dimension, i.e., the line joining them, then $\phi'$ is sufficient to gather two robots in $\mathbb{R}^d$.
\end{proof}

From Lemma~\ref{lem:2dto1d}, it is sufficient that the robots only move on the line joining two robots.
All possible algorithms for gathering two robots in one dimension can be classified based on the parameter $\lambda \in \mathbb{R}$.
In the \textit{move} state a robot can move $\lambda \delta$ distance towards the other robot, where $\delta$ is the distance between two robots.
For example, if $\lambda = 0$, a robot stays in its position, and if $\lambda = 1$, it moves to the position of the other robot.

Let us first describe the impossibility of gathering of two robots in \textit{SSYNC} model~\cite{SuzukiY99}. The capability that the adversary holds in case of \textit{SSYNC} model is whether to activate robots in a particular round or not.
\begin{example}
    If the deterministic algorithm has $\lambda = 1$, then the adversary would activate both robots in the same round.

    \begin{figure}[h!t]\centering
        \includegraphics[width=\linewidth]{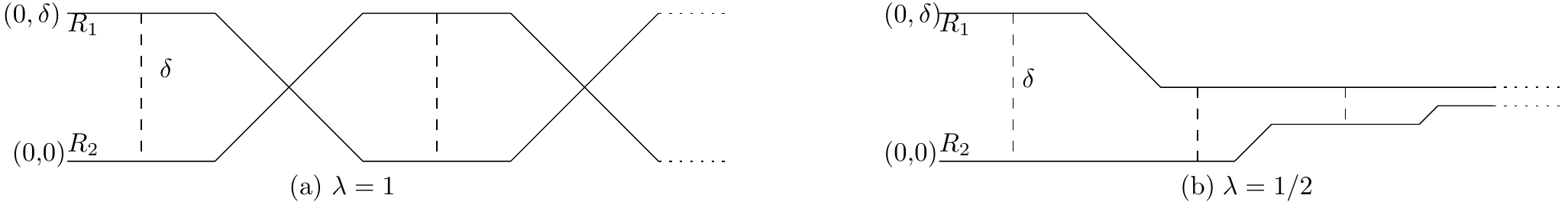}
        \caption{Execution of deterministic algorithms with $\lambda = 1/2$ and $1$ in \textit{SSYNC} model}
        \label{fig:ssyncgathering}
    \end{figure}
    If the deterministic algorithm has $\lambda = 1/2$, then the adversary would activate only one of them in a particular round. This would certainly result in reduction of the distance between the robots, but the distance would never be zero.
    Fig.~\ref{fig:ssyncgathering} shows a sample execution of the robots' behavior for $\lambda = 1$ and $\lambda = 1/2$.
    All the figures included in this paper follow a convention of representing time in the $x$-axis and distance in the $y$-axis.
    We assume $R_1$ starts at $(0, \delta)$ and $R_2$ start at $(0,0)$.
\end{example}

A randomized algorithm with $\lambda = 2$ and -1, has been presented by Izumi et al.~\cite{IzumiIKO13}, which gathers two robots in \textit{SSYNC} model.

\section{Oblivious Adversary}\label{sec:oblivious}
An oblivious adversary is not aware of the random bits produced by the algorithm.
So an oblivious adversary decides the wait time and computation delay regardless of the outcome of the sequence of random bits generated.
In other words, the adversary decides the sequence of wait times and computation delays for each robot before the start of the algorithm.
The adversary may know the ratio of the speed between the robots. In this section, we show the possibility of gathering with zero computation delay, and the speed ratio is known to the algorithm. If the speed ratio is controlled by the adversary -- unknown to the algorithm -- the adversary can ensure that the probability of gathering is zero even with zero wait time and zero computation delay (without any adversarial control). If the adversary controls one of the robots and the other has zero wait time and zero computation delay, then the probability of gathering is also positive, albeit small.
\subsection{Gathering in \textit{ASYNC}$_{\text{IC}}$ model}
First, we prove that two robots with the same speed can gather in the \emph{ASYNC}$_{IC}$ model~\cite{PattanayakMRM19}, i.e., the computation delay is zero ($\mathcal{C} = 0$). In this model, the robots move immediately after they look. For the proof, we gradually establish inequalities, which limit the choices of an adversary to prevent gathering. Finally, we show that for all choices of the adversary, there exists an algorithmic step which leads to the gathering of two robots.

\begin{theorem}\label{thm:SameSpeedAsyncic}
    For an oblivious adversary, it is possible to gather two robots with the same speed in the ASYNC$_{IC}$ model (asynchronous with zero computation delay), i.e., $\mathcal{W} \geq 0$ and $\mathcal{C} = 0$.
\end{theorem}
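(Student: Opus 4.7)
The plan is to invoke Lemma~\ref{lem:2dto1d} to reduce the problem to the line joining the two robots, and then exhibit a simple $\lambda$-class randomized algorithm: upon each look, a robot independently picks $\lambda \in \{0,1\}$ uniformly at random, so it either stays put or moves directly to the observed position of the other robot. The objective is to identify an event in the sample space of coin flips, of strictly positive probability, that forces the two robots to be co-located at some look instant.

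First I would consider $R_1$'s first look at time $t_1$, at which $R_1$ observes $R_2$ at position $p_2$. Conditional on $R_1$ picking $\lambda=1$ (probability $1/2$), $R_1$ commits to the target $p_2$ and, since $\mathcal{C}=0$, starts moving immediately, arriving at time $t_1 + T$ where $T = |p_2-p_1|/v$. I then define the good event $G$ by: (i) $R_1$ picks $\lambda=1$ at its initial look; (ii) $R_2$ is not in an active move phase at $t_1$; and (iii) $R_2$ picks $\lambda=0$ at every look it performs inside $[t_1, t_1+T]$. Under $G$, $R_2$ never moves in this window and remains at $p_2$, so $R_1$ physically reaches $p_2$, and the first subsequent look by either robot witnesses the gathering condition, stopping that robot.

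The heart of the argument is to lower bound $\Pr[G]$. Since the adversary is oblivious, the sequences $\{\mathcal{W}_j(i)\}$ are fixed in advance and cannot adapt to realized random bits; moreover, along the trajectory induced by $G$, $R_2$ performs no movements, so the placement of its cycle boundaries within $[t_1,t_1+T]$ is determined by the prescribed wait sequence alone. If that schedule yields only finitely many $R_2$-looks inside the window, say $k$, then independence of the fair coin flips gives $\Pr[G] \geq \tfrac{1}{2} \cdot \Pr[\text{(ii)}] \cdot (1/2)^k > 0$. Following the hint in the theorem's preamble, I would package this as a sequence of inequalities that progressively restrict the adversary's freedom, until for every remaining adversarial choice an algorithmic coin-flip outcome still produces gathering.

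The main obstacle is the degenerate regime in which the adversary packs infinitely many $R_2$-cycles into $[t_1,t_1+T]$ by setting $\mathcal{W}_2(i)=0$ on a long consecutive run. I would resolve it by the following observation: any individual $R_2$-cycle with $\lambda=1$ necessarily has strictly positive duration, since the observed distance remains nonzero until the instant of gathering. Consequently, an infinite-in-finite-time packing is consistent only with $\lambda=0$ in every packed cycle, which is exactly the condition already required by $G$, and in particular leaves $R_2$ at $p_2$ throughout the window. Combined with the standard fairness assumption ruling out Zeno-type schedules that would otherwise assign zero probability to any fixed bit-string outcome, this completes the argument and yields gathering with strictly positive probability against any oblivious adversary in the ASYNC$_{IC}$ model.
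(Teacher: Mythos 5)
Your overall strategy (reduce to the line, fix a finite ``good'' pattern of coin flips, and lower bound its probability using the obliviousness of the adversary) is in the spirit of the paper's proof, but your specific algorithm and its analysis have a genuine gap, and in fact your algorithm fails outright against a legal adversary. The model permits $\mathcal{W}=0$: consider the schedule $\mathcal{W}_1(i)=\mathcal{W}_2(i)=0$ for all $i$. With $\mathcal{C}=0$, a cycle in which a robot picks $\lambda=0$ has zero total duration, so the robot immediately looks again at the same instant; almost surely it therefore commits to $\lambda=1$ after finitely many flips at that same instant. Consequently both robots look simultaneously at times $0,\delta,2\delta,\dots$, each moves the full distance to the other's observed position, and they swap positions forever, crossing at the midpoint where neither ever looks. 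Your good event $G$ --- ``$R_2$ picks $\lambda=0$ at every look in $[t_1,t_1+T]$'' --- requires infinitely many independent coins to come up $0$ and hence has probability zero; your observation that an infinite packing is ``consistent only with $\lambda=0$ in every packed cycle'' does not rescue this, it merely restates that $G$ is an intersection of infinitely many probability-$1/2$ events. Appealing to a ``standard fairness assumption ruling out Zeno-type schedules'' is not available here: the theorem is stated for $\mathcal{W}\ge 0$, and the paper's own proof devotes its Step~3 precisely to the case where the adversary sets wait times to zero.

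The missing idea is that with only $\lambda\in\{0,1\}$ a robot can never hit a point strictly between the two observed positions, so simultaneous (or near-simultaneous) looks are fatal. The paper's algorithm instead draws $\lambda$ from $\{1,\ 1/2,\ U(0,1)\}$: the value $1/2$ lets two robots that look at the same instant meet at the midpoint, the uniform draw lets a robot land on the trajectory of a moving partner (needed when the adversary keeps one robot perpetually in its move state), and $\lambda=1$ is used to equalize finishing times and to finish off once one robot has declared gathering. If you want to keep your ``freeze one robot, send the other all the way'' event, you would at minimum need to add an intermediate $\lambda$ value and a separate case analysis for schedules in which wait times are zero or summable, which is essentially what the paper's Steps~1--3 do.
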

\begin{proof}
    Each robot when activated, looks at the other robot to determine the position of other robot and chooses a $\lambda$ value based on the following three options (with equal probability and independent of other choices): (i) $\lambda =1$, (ii) $\lambda = 1/2$, or (iii) a real value chosen uniformly at random from $(0,1)$. We now show a sequence of interactions between the algorithm and the adversary that -- with positive probability -- will guarantee that the robots will gather.

    We first introduce some notations.
    Since the computation delay is zero, let the wait time periods provided by the adversary be $\{\mathcal{W}_1(0), \mathcal{W}_1(1), \ldots\}$ and $\{\mathcal{W}_2(0), \mathcal{W}_2(1), \ldots\}$.
    We define the gap, $\gamma$ as the time difference between the look of $R_1$ and $R_2$.
    Since both robots have a speed of one unit distance per unit time, in $\gamma$ amount of time, a robot can cover $\gamma$ distance.
    If the gap $\gamma$ between look of two robots is more than the distance $\delta$, between them, and $R_2$ is idle when $R_1$ looks, then $R_1$ can reach $R_2$ before $R_2$ looks as shown in Fig.~\ref{fig:asyncicgamma}.
    In that case, to prevent gathering adversary has to choose the gap such that $\lvert\gamma\rvert < \delta$.
    If $\gamma = 0$, then both robots look at the same time and gather at mid point provided they choose $\lambda = 1/2$.
    Without loss of generality, assume that $\mathcal{W}_1(0) > \mathcal{W}_2(0)$. Denote, $\gamma_0 = \mathcal{W}_1(0)-\mathcal{W}_2(0)$ as shown in Fig.~\ref{fig:lambda1by2}.
    To avoid gathering, the adversary should choose $\mathcal{W}_1(0)$ and $\mathcal{W}_2(0)$ such that
    \begin{equation}\label{eq:gamma0}
        -\delta< \gamma_0 < \delta\text{ and }\gamma_0\neq 0
    \end{equation}
    If the adversary manages to enforce Inequality~(\ref{eq:gamma0}), we consider (as a first algorithmic step) the event that the algorithm chooses the following specific $\lambda$ values (which are anyway chosen with probability 1/9).

    \paragraph*{Step 1: $(\lambda_1 = \lambda_2 = 1)$} Though the robots start with a gap $\gamma_0$, both robots finish moving at the same time because they have the same speed.
    After both robots finish their movement, the distance between them is $\delta-\gamma_0$. The next gap between looks is defined as $\gamma_1 = \mathcal{W}_1(1) - \mathcal{W}_2(1)$ because they had finished at the same time. Further, we have the inequalities for the second look of robots if they do not gather as the following.
    \begin{equation}\label{eq:gamma1}
        -\delta + \gamma_0 < \gamma_1 < \delta-\gamma_0\text{ and }
        \gamma_1 \neq 0.
    \end{equation}
    Recall that if $\gamma_1 = 0$, then they will gather with constant probability when both choose $\lambda = 1/2$. So for the rest of the argument, we will condition on $\gamma_1 \ne 0$. Under this condition, next algorithmic step we consider (which again occurs with constant probability) is the following choice of $\lambda$ values.
    \begin{figure}[h]
        \begin{minipage}{0.45\linewidth}\centering
            \includegraphics[width=0.9\linewidth]{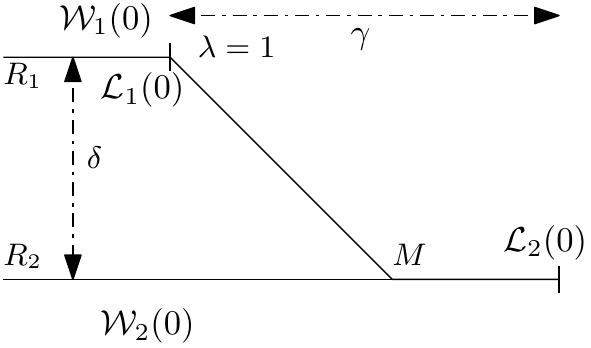}
            \caption{Gathering of two robots when the difference between looks, $\gamma$, is greater than the current distance between the robots, $\delta$.}\label{fig:asyncicgamma}
        \end{minipage}\hfill
        \begin{minipage}{0.45\linewidth}\centering
            \includegraphics[width=0.9\linewidth]{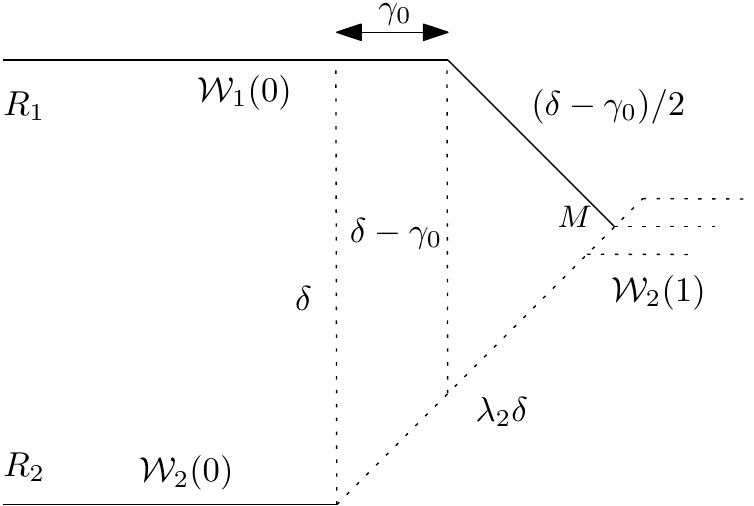}
            \caption{Gathering of two robots for $\lambda_1 = 1/2$ and $\lambda_2 \in U(0,1)$.}\label{fig:lambda1by2}.
        \end{minipage}
    \end{figure}
    \paragraph*{Step 2: $(\lambda_1 = 1/2, \lambda_2 \in U(0,1))$} Observe Fig.~\ref{fig:lambda1by2}. If $\lambda_2 = (\delta+\gamma_0)/2\delta$, then $R_1$ and $R_2$ meet at $M$. Now we estimate the probability that $R_2$ would choose $\lambda_2$ such that they will gather in the next cycle. For the robots to gather in the next cycle, whichever looks first has to find the other robot in idle state.
    \begin{description}
        \item[Case 1:] Now, the algorithm can choose $\lambda_2$ value such that it creates the situation in Fig.~\ref{fig:asyncicgamma}.
              Suppose, $R_1$ looks first in the next cycle, then it has to find $R_2$ in the idle state -- second look of $R_1$ (at $\mathcal{W}_1(0) + (\delta - \gamma_0)/2 + \mathcal{W}_1(1))$ occurs after $R_2$ finishes moving (at $\mathcal{W}_2(0) +\lambda_2\delta$) and before it looks for second time (at $\mathcal{W}_2(0) +\lambda_2\delta + \mathcal{W}_2(1)$). So,
              \begin{equation}\label{eq:lambda2range}
                  \lambda_2\delta < (\delta + \gamma_0)/2 + \mathcal{W}_1(1) < \lambda_2\delta + \mathcal{W}_2(1)
              \end{equation}
              Also, the distance between them at that point of time ($\delta - \lambda_2\delta - (\delta-\gamma_0)/2$) should be less than the gap between looks ($\lambda_2\delta + \mathcal{W}_2(1) - (\delta + \gamma_0)/2 - \mathcal{W}_1(1)$).
              \begin{equation}\label{eq:gamma1limit}
                  \lvert \lambda_2\delta - (\delta+\gamma_0)/2   \rvert < \lambda_2\delta - (\delta+\gamma_0)/2 - \gamma_1
              \end{equation}
              \begin{itemize}
                  \item
                        If $\lambda_2\delta > (\delta+\gamma_0)/2$, then $\gamma_1 < 0$ from Inequality~(\ref{eq:gamma1limit}).
                        The range of $\lambda_2$ for which Inequality~(\ref{eq:lambda2range}) and~(\ref{eq:gamma1limit}) are satisfied can be given by
                        \[
                            \lambda_2\in \Big(\max\big((\delta+\gamma_0)/2\delta, (\delta+\gamma_0+2\gamma_1)/2\delta\big), (\delta+\gamma_0+2\mathcal{W}_1(1))/2\delta\Big)
                        \]
                        Since $\gamma_1 < 0$, the probability of choosing such a $\lambda_2$ is $\mathcal{W}_1(1)/\delta$ -- the length of the smaller interval since it is chosen uniformly at random from (0,1).

                  \item  If $\lambda_2\delta < (\delta+\gamma_0)/2$, then $\lambda_2 > (\delta+\gamma_0+\gamma_1)/2\delta$ from Inequality~(\ref{eq:gamma1limit}) and $\gamma_1 < 0$ from Inequality~(\ref{eq:lambda2range}). So, we have
                        \begin{equation*}
                            \lambda_2 \in \Big(\max\big((\delta+\gamma_0+2\gamma_1)/2\delta,(\delta+\gamma_0+\gamma_1)/2\delta\big), (\delta+\gamma_0)/2\delta\Big)
                        \end{equation*}
                        The probability of choosing such a $\lambda_2$ is $-\gamma_1/2\delta$, since $\gamma_1 < 0$.
              \end{itemize}

        \item[Case 2:] Conversely, if $R_2$ looks first in the next step, then it has to find $R_1$ in the idle state. So,
              \begin{equation}\label{eq:lambda2rangeR2}
                  (\delta + \gamma_0)/2 < \lambda_2\delta + \mathcal{W}_2(1) <  (\delta + \gamma_0)/2 + \mathcal{W}_1(1)
              \end{equation}
              Similarly, the distance between them at that point of time should be less than the gap between looks.
              \begin{equation}\label{eq:gamma1limitR2}
                  \lvert \lambda_2\delta - (\delta+\gamma_0)/2   \rvert < (\delta+\gamma_0)/2 + \gamma_1 -\lambda_2\delta
              \end{equation}
              \begin{itemize}
                  \item If $\lambda_2\delta < (\delta+\gamma_0)/2$, then $\gamma_1 > 0$.
                        The range of $\lambda_2$ for which Inequality~(\ref{eq:lambda2rangeR2}) and~(\ref{eq:gamma1limitR2}) are satisfied can be given by
                        \[
                            \lambda_2\in \Big((\delta+\gamma_0-2\mathcal{W}_2(1))/2\delta, (\delta+\gamma_0)/2\delta\Big)
                        \]
                        The probability of choosing such $\lambda_2$ is $\mathcal{W}_2(1)/\delta$.
                  \item If $\lambda_2\delta > (\delta+\gamma_0)/2$, then $\lambda_2 < (\delta+\gamma_0+\gamma_1)/2\delta$. So, we have
                        \begin{equation*}
                            \lambda_2 \in \Big((\delta+\gamma_0-2\mathcal{W}_2(1))/2\delta,\min((\delta+\gamma_0+2\gamma_1)/2\delta, (\delta+\gamma_0+\gamma_1)/2\delta)\Big)
                        \end{equation*}
                        The probability of choosing such a $\lambda_2$ is $\mathcal{W}_1(1)/\delta$, if $\gamma_1 < 0$ and $(\mathcal{W}_1(1)+\mathcal{W}_2(1))/2\delta$, if $\gamma_1 > 0$.
              \end{itemize}

    \end{description}
    As the robot has a probability of $1/3$ for choosing $\lambda$ values from 1, 1/2 and $U(0,1)$, we have a multiplier of $1/27$ for the three choices of $\lambda$ values made in total by the two robots.

    Recall that $\gamma_1 \ne 0$ from Inequality~(\ref{eq:gamma1}). So the probability of gathering is dependent on the values $\mathcal{W}_1(1)$ and $\mathcal{W}_2(1)$, the adversary can choose these to be zero. Now, we consider the following algorithmic step to show a constant probability of gathering if $\mathcal{W}_1(1)$ or $\mathcal{W}_2(1)$ is equal to zero.
    \paragraph*{Step 3: $(\lambda_1 = 1/2)$}
    \begin{description}
        \item[Case 1:] If $\mathcal{W}_1(1) = 0$, then $R_1$ looks again at $M$ and finds $R_2$ at $M$ and decides that it has gathered. Once $R_1$ decides it has gathered, it does not move. We can correspondingly assign $\lambda$ values to be 1 and 1 for two consecutive looks of $R_2$ for the robots to gather. The probability of such an event is 1/27.
        \item[Case 2:] If $\mathcal{W}_2(1) = 0$, then $R_2$ can choose $\lambda$ value $1/2$. It corresponds to two situations, given the value of $\gamma_0$.
              \begin{itemize}
                  \item If $\gamma_0 \leq \delta/2$, then at the second look of $R_2$, $R_1$ is moving. So $R_2$ can choose $\lambda =1/2$ again and gather at $M$.
                  \item If $\gamma_0 > \delta/2$, then at the second look of $R_2$, $R_1$ is still in idle state. The gathering scenario repeats itself with $\gamma_0 \rightarrow \gamma_0 - \delta/2$ and $\delta\rightarrow\delta/2$. The adversary can choose $\mathcal{W}_2(l) = 0$ for $l \in \mathbb{N}$ to repeat this case. But, this situation can be repeated finitely many times before Inequality~(\ref{eq:t2i0}) holds since $\gamma_0 < \delta$ from Inequality~(\ref{eq:gamma0}).
                        \begin{equation}\label{eq:t2i0}
                            \delta/2 + \delta/4 + \dots + \delta/2^k > \gamma_0
                        \end{equation}
                        The associated probability is $1/3^{k+1}$.
              \end{itemize}
    \end{description}
    Regardless of the value of wait times chosen by adversary, we show that the probability of gathering is always positive.
    This concludes the proof.
\end{proof}
Next, we extend the possibility result of Theorem~\ref{thm:SameSpeedAsyncic} to accommodate different speed robots where the algorithm knows the speed of robots. Though it is not possible for the algorithm to distinguish between the faster robot and the slower robot, the algorithm can still utilize the ratio of speed to achieve gathering.
\begin{theorem}\label{thm:AsyncicKnownAlpha}
    It is possible to gather two robots if the algorithm knows the speed ratio, $\alpha$, of the robots for an oblivious adversary in \textit{ASYNC}\(_{IC}\) model.
\end{theorem}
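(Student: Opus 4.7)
The plan is to mirror the three-step case analysis of Theorem~\ref{thm:SameSpeedAsyncic} but with a $\lambda$-menu tailored to the known speed ratio $\alpha$. Since the algorithm knows $\alpha$ but cannot tell which of the two anonymous robots is the faster one, each robot, when it looks, will pick $\lambda$ uniformly from a constant-size menu that includes (i) $\lambda = 1$, (ii) the weighted-meeting values $\lambda = 1/(1+\alpha)$ and $\lambda = \alpha/(1+\alpha)$, and (iii) a value drawn uniformly from $(0,1)$. Since the menu is of constant size and the choices of the two robots are independent, any fixed pair of $\lambda$ values is realized with constant probability, which is all we need to obtain a positive probability of gathering.

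First I would dispose of the easy sub-cases. If the gap $\gamma_0$ between the two looks satisfies $|\gamma_0|\cdot s > \delta$, where $s$ is the larger of the two speeds, then exactly as in Fig.~\ref{fig:asyncicgamma} the robot that looks first can set $\lambda = 1$ and reach the other robot while it is still idle, so the next look detects gathering. If $\gamma_0 = 0$, the robots picking the pair $\bigl(\tfrac{1}{1+\alpha},\tfrac{\alpha}{1+\alpha}\bigr)$ in either assignment (which happens with constant probability) both aim at the common weighted-meeting point; because the travel times $\lambda_j\delta/s_j$ coincide, both arrive at this point simultaneously, and so they gather. This replaces the symmetric $\lambda = 1/2$ gathering used in Theorem~\ref{thm:SameSpeedAsyncic} at the point where the $\gamma = 0$ case is handled.

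For the remaining range $0 < |\gamma_0|<\delta/s$ I would reduce to the previous theorem's skeleton. Step~1 uses $\lambda_1=\lambda_2=1$; unlike the equal-speed case, the two motions no longer finish simultaneously, so the residual distance and the ``effective'' new gap $\gamma_1$ must be recomputed as an explicit affine expression in $\gamma_0$, $s_1$, $s_2$, and the adversarial wait times $\mathcal{W}_1(1),\mathcal{W}_2(1)$. Step~2 keeps the role of $\lambda_1 = 1/(1+\alpha)$ (aimed at the weighted meeting point) and takes $\lambda_2$ uniform in $(0,1)$; as in the proof of Theorem~\ref{thm:SameSpeedAsyncic} one then writes down, case-by-case, the interval of values of $\lambda_2$ for which the robot that looks first catches the other idle, each interval having length proportional to the wait time or to $|\gamma_1|$ divided by $\delta$. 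Step~3 covers the degenerate sub-cases $\mathcal{W}_j(1)=0$ by iterating the weighted-$\lambda$ move, exactly in parallel with Case~2 of Step~3 in Theorem~\ref{thm:SameSpeedAsyncic}, using the geometric-series bound analogous to~(\ref{eq:t2i0}).

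The main obstacle is purely computational: reworking the geometric inequalities of Step~2 with two distinct speeds $s_1\neq s_2$. Travel times no longer equal $\lambda\delta$ but $\lambda\delta/s_j$, so the analogues of~(\ref{eq:lambda2range})--(\ref{eq:gamma1limitR2}) become asymmetric in the two robots, and one must carefully check that in each sub-case the feasible interval for $\lambda_2$ still has positive length proportional to $\min(\mathcal{W}_1(1),\mathcal{W}_2(1),|\gamma_1|)/\delta$. Once these intervals are verified to be non-empty uniformly in the adversary's choices (after excluding the $\gamma_1=0$ case, which is already handled by the weighted meeting in Step~1's aftermath), the constant-probability multiplier from the menu closes the argument, and gathering happens with positive probability regardless of the oblivious adversary's schedule.
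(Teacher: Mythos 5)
Your proposal takes essentially the same approach as the paper: the paper's own proof is a short sketch that simply replaces $\lambda = 1/2$ by the weighted-meeting values $1/(\alpha+1)$ and $\alpha/(\alpha+1)$ (both included because the robots are anonymous) and reruns the case analysis of Theorem~\ref{thm:SameSpeedAsyncic}. Your writeup is in fact more detailed than the paper's, and the extra care you flag about reworking the travel-time inequalities with $\lambda\delta/s_j$ is exactly the computation the paper leaves implicit.
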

\begin{proof}
    In this proof, we recreate the same situations as in the proof of Theorem~\ref{thm:SameSpeedAsyncic}.
    Similar to the previous proof, we assume $\mathcal{W}_1(0) > \mathcal{W}_2(0)$. Let us normalize the speed of $R_1$ to 1 and denote $\alpha$ as the speed ratio of $R_2$ to $R_1$.
    Choosing $\lambda_1 = 1/(\alpha+1)$ instead of $1/2$ leads to similar situation as Fig.~\ref{fig:lambda1by2}. Likewise, we can determine the probability of choosing $\lambda_2$ in a small interval around the meeting point. Since the robots are anonymous, the algorithm must include the reciprocal of $\alpha$ also. So the other value of $\lambda$ is $1/(1+1/\alpha) = \alpha/(\alpha+1)$. 
\end{proof}
\subsection{Gathering with Unknown $\alpha$}
Unlike Theorem~\ref{thm:AsyncicKnownAlpha}, if the adversary controls the speed of robots, i.e., the algorithm does not have any knowledge about the speed of robots, then it is improbable for the robots to gather. The analogy follows from the fact that the adversary can change the speed of the robots in each round such that it is unlikely for both robots meet at the same point at the same time. Here improbable stands for a zero probability event where the set of successful outcomes is non-empty. The following theorem and proof shows that the algorithm has to choose a specific real number in an interval,
\begin{theorem}\label{thm:UnknownAlpha}
    It is improbable for two robots to gather if the speed ratio $\alpha$ is controlled by the adversary with zero wait time and zero computation delay for an oblivious adversary.
\end{theorem}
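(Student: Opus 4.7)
The plan is to pin down exactly when gathering can occur in a single activation cycle under the conditions of the theorem, show that this event reduces to the algorithm sampling a specific real number determined by $\alpha$, and then conclude via a measure-theoretic argument. Concretely, the adversary fixes $\mathcal{W}_j(k)=\mathcal{C}_j(k)=0$ for all $j$ and $k$ and, knowing the algorithm but not the random bits, chooses a speed ratio $\alpha$ in advance. Both robots then look simultaneously at time $0$, independently sample $\lambda_1,\lambda_2$ from the algorithm's distribution, and immediately start moving toward each other.

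For the first cycle, I would normalize $R_1$'s speed to $1$ and $R_2$'s to $\alpha$, so that $R_1$ reaches its destination at time $t_1^\star=\lambda_1\delta$ and $R_2$ at $t_2^\star=\lambda_2\delta/\alpha$. Since $\mathcal{W}=\mathcal{C}=0$, whichever robot finishes first immediately initiates its next look. A brief case split on the sign of $t_1^\star-t_2^\star$ shows that, for the earlier of these two looks to witness both robots at the same position, the algorithm must have sampled either $\lambda_1=1/(1+\alpha)$ or $\lambda_2=\alpha/(1+\alpha)$, with equality $t_1^\star=t_2^\star$ forcing both simultaneously. These are specific real numbers completely determined by the adversary's choice of $\alpha$.

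The algorithm's $\lambda$-distribution is a Borel probability measure $\mu$ on $(0,1)$ and hence has at most countably many atoms. The oblivious adversary knows $\mu$ and selects $\alpha$ so that neither $1/(1+\alpha)$ nor $\alpha/(1+\alpha)$ coincides with an atom of $\mu$; the set of forbidden $\alpha$ is a countable union of isolated points, so uncountably many valid choices remain. For such an $\alpha$, the first-cycle gathering probability is exactly $\mu(\{1/(1+\alpha)\})+\mu(\{\alpha/(1+\alpha)\})=0$. Non-emptiness is immediate: for the same $\alpha$, the outcome $\lambda_1=1/(1+\alpha)$, paired with any $\lambda_2$ for which $R_2$ is still in transit at $R_1$'s next look, does lead to gathering, so the set of successful outcomes is non-empty and the situation is correctly classified as \emph{improbable} rather than impossible.

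To extend the argument to all later cycles, I would observe inductively that, conditioned on no gathering so far, the configuration at the start of cycle $k+1$ is a piecewise-affine function of $\alpha$ and of the sampled history, so gathering in cycle $k+1$ again reduces to the next sample equalling a specific real determined by $\alpha$ and the history. Tightening the adversary's $\alpha$ to avoid the (still countable) union of atom-based bad values across all cycles makes every cycle contribute a probability-zero event, and countable subadditivity yields a total gathering probability of zero. The main obstacle, I expect, is precisely this inductive step: after the first cycle the two robots fall out of sync, and one has to carefully track the piecewise-affine position functions of $\alpha$ and verify that the gathering condition in each subsequent cycle remains a codimension-one constraint on the fresh sample, rather than an open event that the continuous part of $\mu$ could hit with positive probability.
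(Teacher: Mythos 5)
Your core reduction is the same as the paper's: gathering forces a freshly sampled $\lambda$ to equal a specific real number determined by $\alpha$, and the oblivious adversary fixes $\alpha$ so that this value is missed almost surely. You are in fact more careful than the paper on the measure-theoretic side: the paper simply asserts that the probability of choosing $\lambda\in\{1/(\alpha+1),1/(\alpha-1)\}$ is zero, without discussing atoms of the $\lambda$-distribution, how the adversary's choice of $\alpha$ avoids them, or the countable subadditivity over cycles --- all of which you supply. The one place you stop short, the induction over later cycles where you fear the required $\lambda$ becomes a history-dependent piecewise-affine function of $\alpha$, is exactly where the paper's framing saves the work. Instead of analyzing cycle $1$ and inducting forward, the paper conditions on the \emph{last} look before gathering: at that look one robot observes the other at the current distance $\delta'$ while the other is in motion (with zero wait time and zero computation delay a robot that has not decided it has gathered is never idle for a positive duration), and because a robot moves $\lambda$ times the \emph{observed} distance, the coincidence condition is scale-invariant --- it reads $\lambda\alpha\delta'\pm\lambda\delta'=\delta'$, so $\delta'$ cancels and the required $\lambda$ is always one of the fixed values $1/(1+\alpha)$ or $1/(\alpha-1)$ (together with their counterparts under $\alpha\mapsto 1/\alpha$, since the robots are anonymous), independent of the history. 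Hence the bad set of $\lambda$-values is one finite set valid for every cycle, the set of $\alpha$ the adversary must avoid is still countable, and your countable-subadditivity argument closes without tracking the configuration map. I would replace your forward induction with this last-look, scale-invariance observation; note also that your first-cycle analysis only yields the ``moving towards each other'' value $1/(1+\alpha)$, whereas the ``same direction'' value $1/(\alpha-1)$ genuinely arises in later cycles after the robots have crossed and must be added to the excluded set.
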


\begin{proof}
    Consider the last look before two robots gather. For the two robots to gather, one robot would look at a time when the other robot is at its position. That can happen in the two scenarios, as shown in Fig.~\ref{fig:unknownalpha}.
    \begin{figure}[h]\centering
        \includegraphics[height=0.4\linewidth]{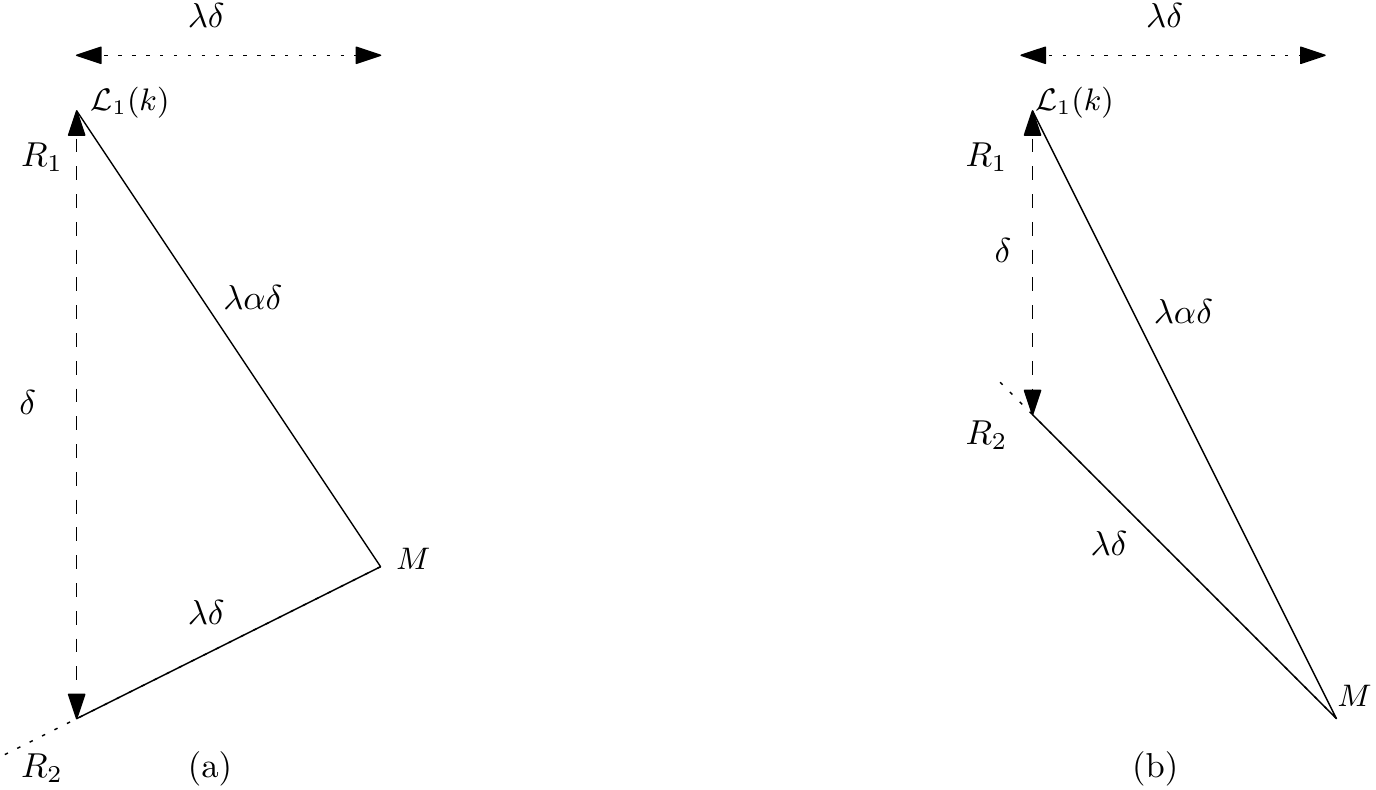}
        \caption{Improbability of gathering two robots with unknown $\alpha$}\label{fig:unknownalpha}
    \end{figure}
    Let the ratio of the speed of $R_1$ to $R_2$ is $\alpha$ with the speed of $R_2$ is normalized to 1. Here, $\alpha \in (0, \infty)$.
    For robot $R_1$ to reach the destination, it has to travel a distance of $\lambda \alpha\delta$ while the other robot travels a distance of $\lambda\delta$ in the same amount of time.
    If both the robots are traveling towards each other as shown in Fig.~\ref{fig:unknownalpha}(a), then the sum of distance covered should be $\delta$ as observed by $R_1$ at $\mathcal{L}_1(k)$,
    \[\lambda\alpha\delta + \lambda\delta = \delta \implies \lambda = 1/(1+\alpha)\]
    In the other case, if both the robots are traveling in the same direction as shown in Fig.~\ref{fig:unknownalpha}(b), then the difference of the distance should be $\delta$,
    \[\lambda\alpha\delta - \lambda\delta = \delta \implies \lambda = 1/(\alpha-1)\]
    If the algorithm chooses $\lambda \in \{1/(\alpha+1), 1/(\alpha-1)\}$, then the robots gather.
    The algorithm can choose any real number as $\lambda$ value.
    The probability of choosing $\lambda \in \{1/(\alpha+1), 1/(\alpha-1)\}$ from $(-\infty, \infty)$ is equal to zero, if the algorithm does not know the value of $\alpha$. Hence, it is improbable for the robots to gather.
\end{proof}

\subsection{Gathering of Robots when only one Robot controlled by the Adversary}
Now, we consider a situation where one robot has fixed wait time and computation delay such that the adversary cannot manipulate the robot. Following result shows the possibility of gathering.
\begin{theorem}\label{thm:R1zerozero}
    It is possible for two robots to gather if one robot has zero wait time and zero computation delay for an oblivious adversary if the algorithm knows the speed ratio $\alpha$.
\end{theorem}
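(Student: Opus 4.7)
The plan is to adapt the case analysis used in the proof of Theorem~\ref{thm:SameSpeedAsyncic}, exploiting the asymmetry between the two robots' schedules. Since the robots are anonymous and homogeneous, both execute the same algorithm: at each activation, independently pick $\lambda$ uniformly from the finite menu $\{1,\ 1/(1+\alpha),\ \alpha/(1+\alpha)\}$ together with a ``diffuse'' option that draws $\lambda$ uniformly from $(0,1)$. The asymmetry between the two robots lives entirely in the adversary's schedule: $R_1$'s parameters are forced to $\mathcal{W}_1=\mathcal{C}_1=0$, while $R_2$'s are arbitrary.

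The first thing I would exploit is that $\mathcal{W}_1=\mathcal{C}_1=0$ gives $R_1$ the very first look at $t=0$, whereas $R_2$'s first look is postponed to $\mathcal{L}_2(0)=\mathcal{W}_2(0)$ and is followed by a stationary computation window of length $\mathcal{C}_2(0)$. In particular, throughout $[0,\,\mathcal{W}_2(0)+\mathcal{C}_2(0)]$ the robot $R_2$ sits at its initial position. If $\mathcal{W}_2(0)+\mathcal{C}_2(0)\geq \delta/s_1$, then with constant probability the algorithm picks $\lambda_1=1$ at $\mathcal{L}_1(0)$, $R_1$ walks straight to $R_2$'s initial position within this stationary window, and when $R_1$ looks again (with no delay) it finds $R_2$ coincident with itself and declares gathering. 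For the complementary adversarial choice $\mathcal{W}_2(0)+\mathcal{C}_2(0)<\delta/s_1$, I would mirror Steps~2 and~3 of Theorem~\ref{thm:SameSpeedAsyncic}, substituting $1/(1+\alpha)$ and $\alpha/(1+\alpha)$ for $1/2$ exactly as done in Theorem~\ref{thm:AsyncicKnownAlpha}. The key observation is that $R_1$'s subsequent look instants $\mathcal{L}_1(k+1)=\mathcal{L}_1(k)+\lambda_{1,k}\delta_k/s_1$ have no intervening idle time, so via the diffuse $\lambda$ branch at earlier cycles, $R_1$ can land inside essentially any target sub-interval of the time axis with constant probability. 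One then re-derives the gap inequalities between the looks of $R_1$ and $R_2$ under the substitution $\mathcal{W}_2(k)\mapsto\mathcal{W}_2(k)+\mathcal{C}_2(k-1)$, which absorbs the added stationary computation window, while keeping $\mathcal{W}_1(k)=\mathcal{C}_1(k)=0$; every interval that was non-empty in Theorem~\ref{thm:SameSpeedAsyncic} remains non-empty in the shifted version, and the corresponding per-step probability remains a strictly positive constant.

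The main obstacle will be the careful re-derivation of these inequalities in the complementary case; because $\mathcal{C}_2(k)>0$ shifts the effective look-to-move delay of $R_2$, every range inequality from the Case~1/Case~2 sub-analysis of Theorem~\ref{thm:SameSpeedAsyncic} has to be rewritten, and one must verify that the admissible sub-intervals for $\lambda_2$ remain of length at least proportional to $\min\!\bigl(\mathcal{W}_2(k)+\mathcal{C}_2(k-1),\,|\gamma_k|\bigr)$. I expect this to go through cleanly, so that the product of per-step probabilities across the finitely many branches analyzed stays strictly positive, matching the informal statement preceding the theorem that the gathering probability is positive but small.
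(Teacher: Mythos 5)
Your first case is close but not complete, and your second case is where the real gap sits. In the first case, if $\mathcal{W}_2(0) < \delta/s_1 \leq \mathcal{W}_2(0)+\mathcal{C}_2(0)$, then $R_2$ has already taken a snapshot in which $R_1$ is strictly between the two initial positions; after $R_1$ arrives, declares gathering, and freezes, $R_2$ will still execute a move of positive length away from the collocated point (every $\lambda$ on your menu is strictly positive), so $R_1$'s declaration alone does not constitute gathering --- you additionally need $R_2$ to later choose $\lambda=1$ against the now-frozen $R_1$. That is an easy patch. The serious problem is the complementary case, where you defer the entire argument to a re-derivation of Steps~2--3 of Theorem~\ref{thm:SameSpeedAsyncic} and assert that it ``goes through cleanly.'' It does not obviously do so: by your own accounting the admissible $\lambda_2$-intervals have length proportional to $\min\bigl(\mathcal{W}_2(k)+\mathcal{C}_2(k-1),\,\lvert\gamma_k\rvert\bigr)$, and the oblivious adversary is free to fix every $\mathcal{W}_2(k)$ and $\mathcal{C}_2(k)$ at zero (or vanishingly small) in advance, collapsing those intervals so that the diffuse branch contributes probability zero. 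At that point only the discrete menu values can succeed, and you have not identified the event in which they do; the claim that each per-step probability is a ``strictly positive constant'' is also not sustainable, since the interval lengths are under adversarial control. This unproved case is where the content of the theorem lives.

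The paper's proof uses a different and much shorter mechanism that covers the complementary case uniformly. Since $R_1$ has no idle time whatsoever, it can choose $\lambda = 1/(\alpha+1)$ at every look, geometrically closing the gap: after $k$ consecutive such choices it has covered $\delta(1-2^{-k})$ (in the same-speed normalization), so for the adversary's fixed value of $\mathcal{W}_2(0)+\mathcal{C}_2(0) < \delta$ there is a finite $k$ at which $R_1$'s look finds $R_2$ already in its move state --- an event whose probability is exponentially small in $k$ but positive, with $k$ determined by the schedule, not by the random bits. Once $R_1$ observes $R_2$ in flight, the meeting-point computation of Theorem~\ref{thm:UnknownAlpha} takes over: choosing $\lambda = 1/(\alpha+1)$ makes the two robots coincide by $R_1$'s next (immediate) look, after which $R_1$ freezes and $R_2$ finishes the job. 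This ``catch $R_2$ in flight'' argument handles precisely the all-zero-delay schedules that defeat your interval-length bookkeeping, and I would recommend replacing your Step-2/3 mirroring with it.
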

\begin{figure}[h]\centering
    \includegraphics[height=0.3\linewidth]{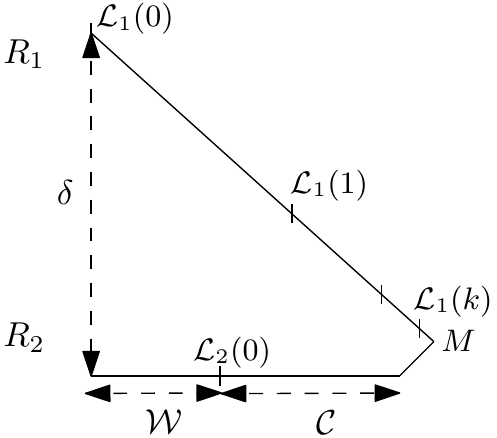}
    \caption{Gathering with $R_1$ having zero wait time and zero computation delay for $\alpha=1$}\label{fig:r1zerozero}
\end{figure}
\begin{proof}
    Consider the situation shown in Fig.~\ref{fig:r1zerozero}. Robot $R_1$ has both wait time and computation delay to be 0. We construct a scenario of gathering utilizing the situation in Theorem~\ref{thm:UnknownAlpha}.
    If $R_1$ looks at $R_2$ when $R_2$ is in the move state and chooses $\lambda=1/2$, then $R_1$ and $R_2$, both will cover half the distance before $R_1$ looks. At the next look of $R_1$, $R_1$ finds $R_2$ at its own position and decides that it has gathered. Now $R_1$ stays put. At $R_2$'s next look, it can choose $\lambda=1$ and they gather.
    For such a situation to happen, there are few constraints. $R_1$ chooses $\lambda=1/2$ continuously until it observes $R_2$ is in the move state. We have the following equation
    \[P(\delta/2 + \delta/4+\delta/8 +\cdots+\delta/2^k > \mathcal{W}+\mathcal{C}) = P(\delta-\mathcal{W}-\mathcal{C}>\delta/2^k) = 1/2^k\]
    This is true for some $k \in \mathbb{N}$. The probability of gathering is associated with the ratio $(\delta-t_i-t_c)/\delta$. Note that, for different speed robots, $R_1$ can choose $\lambda = 1/(\alpha + 1)$ instead of 1/2. Analysis would be likewise.
\end{proof}
\subsection{Gathering of two same speed robots for $\mathcal{W}+\mathcal{C} \geq \tau$}
As per the proof of Theorem~\ref{thm:SameSpeedAsyncic}, observe that the adversary can make the wait times infinitesimally smaller to have a very small probability albeit non-zero.
In this section, we restrict the adversary to prevent it from making the wait time and computation delay arbitrarily small or zero. So the adversary must choose the wait time and computation delay such that the sum of the wait time and computation delay shall be greater than a predetermined constant $\tau$. The look of a robot can occur at any time instance during that period.
Consider an algorithm that chooses one of the \(\lambda\) values 1, 1/2 and 0 with probability 1/3 each (and independent of all other choices).  First, we present some definitions and lemmas before going to the final result. 


\begin{definition}[Maximum Distance]
    Distance $\delta$ between two robots is the \textit{maximum distance} at time $t$, if the distance at any point of time $t' > t$ does not exceed $\delta$.
\end{definition}
Note that, the instantaneous distance between the robots can be zero when the robots are crossing each other, but that does not represent the maximum distance between them.

We define a sequence of attempts starting from the initiation of the algorithm. Consider the first cycle of each robot. Let $R_1$ be the robot which has moved later in the first cycle at time $t$. Then we take the latest choice of $\lambda$ value of $R_1$ and the choice of $\lambda$ value of $R_2$ immediately prior to $t$. These two looks together make an attempt. Notice that, $R_2$ can have multiple looks before $R_1$ starts moving. We only consider the latest look of $R_2$ as part of the attempt. The subsequent attempt would consider the next cycles of both robots.

\begin{definition}[Successful attempt]
    An attempt is \textit{successful} if the maximum distance between the robots is at most half of the maximum distance before the choices of $\lambda$ values done by the robots.
\end{definition}

\begin{lemma}\label{lem:phase}
    An attempt is successful with a probability at least 2/9.
\end{lemma}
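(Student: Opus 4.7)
The plan is to identify two particular combinations of $(\lambda_1,\lambda_2)\in\{0,1/2,1\}^2$, each occurring with probability $1/9$ by independence of the random choices, and to show that each combination always produces a successful attempt regardless of the oblivious adversary's schedule. Summing the probabilities of the two disjoint favorable events then yields the bound $2/9$. I target the combinations $(\lambda_1,\lambda_2)=(1,0)$ and $(\lambda_1,\lambda_2)=(1/2,0)$. Both share the crucial feature that $\lambda_2=0$, so $R_2$ makes no movement in the cycle contributing to this attempt and therefore sits at a fixed position from the moment of its look until after $R_1$ has completed its move at time $t$.

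Before handling the two cases, I would establish a \emph{bounding-interval monotonicity} invariant: at every time $t'$ during and after the attempt, both $R_1(t')$ and $R_2(t')$ lie inside the interval spanned by their positions at the beginning of the attempt. This follows inductively from the fact that every snapshot taken by either robot lies in the current bounding interval, and hence the destination obtained by scaling such a snapshot by $\lambda\in\{0,1/2,1\}$ from the mover's own position also lies in that interval. A direct consequence is that the current distance $|R_1(t')-R_2(t')|$ never exceeds the distance at the start of the attempt, and the maximum distance defined as the supremum of future distances is controlled by the current diameter of the bounding interval.

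For $(\lambda_1,\lambda_2)=(1,0)$, the stationarity of $R_2$ ensures that $R_1$'s snapshot records $R_2$'s true position; then with $\lambda_1=1$, $R_1$ lands exactly on $R_2$, the two robots detect coincidence at their next looks and halt, and the post-attempt maximum distance is $0$. For $(\lambda_1,\lambda_2)=(1/2,0)$, $R_1$ travels half of its observed distance $d_1$ toward the stationary $R_2$, leaving a gap of $d_1/2 \le \delta/2$ because $d_1 \le \delta$ by the invariant; subsequent moves keep both robots inside the new bounding interval of length at most $\delta/2$, so the post-attempt maximum distance is also at most $\delta/2$. Both combinations therefore produce a successful attempt, and the total probability sums to at least $1/9 + 1/9 = 2/9$.

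The main obstacle lies in the careful verification of the monotonicity invariant in the fully asynchronous setting, particularly when a robot's snapshot may capture the other robot mid-move from an earlier cycle, so that the observed position is not the other robot's position at the start of the attempt. The inductive argument has to confirm that every such in-flight snapshot still lies in the current bounding interval, so that no move ever pushes a robot outside these shrinking bounds. Once this invariant is secured, the remainder of the proof is a one-line enumeration of the two favorable combinations.
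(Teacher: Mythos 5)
Your decomposition is genuinely different from the paper's, and one of your two favorable events fails. The paper conditions on the \emph{later}-moving robot choosing $\lambda=1/2$ and the earlier-moving robot choosing $\lambda\in\{1,1/2\}$ (probability $\tfrac13\cdot\tfrac23=\tfrac29$); you instead condition on one robot choosing $\lambda=0$ and the other $\lambda\in\{1,1/2\}$. The gap is in the sentence ``the stationarity of $R_2$ ensures that $R_1$'s snapshot records $R_2$'s true position.'' An attempt pairs the later mover's look with the \emph{last} look of the other robot before the later mover's move at time $t$, but the later mover's own look may occur long before that --- before the $\lambda_2=0$ look and even before several complete earlier cycles of $R_2$ that are not part of the attempt and whose $\lambda$ values you do not condition on. So the snapshot can be arbitrarily stale. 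Concretely, take $\delta=4$ with $R_1$ at $4$ and $R_2$ at $0$. Let $R_1$ look at time $0$ (seeing $R_2$ at $0$) and suffer computation delay $100$; let $R_2$ run two quick cycles with $\lambda=1/2$ each (moving $0\to 2\to 3$), then look a third time choosing $\lambda_2=0$, and not look again until after time $100$. The attempt's pair is $(\lambda_1,\lambda_2)=(1,0)$, yet $R_1$ moves from $4$ to its stale snapshot $0$ while $R_2$ sits at $3$, leaving maximum distance $3>\delta/2$. This schedule is legal for an oblivious adversary and the intermediate choices occur with positive probability, so $(1,0)$ is not a surely-successful combination and your count drops to at most $1/9$.

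The paper's choice of events sidesteps exactly this staleness problem: the robot that performs the halving is the one that looks \emph{later}, so whatever intermediate position $x\in[0,\delta]$ it observes, its midpoint move lands in $[0,\delta/2]$, and a short case analysis on who looks first in the next cycle confines the earlier mover to the same half. Your bounding-interval monotonicity invariant is correct and is implicitly needed in the paper's argument too (it is what makes ``maximum distance'' well behaved), but it only keeps both robots inside the \emph{old} interval of length $\delta$, which is precisely what the counterexample exploits; it cannot substitute for an argument that the mover's observation is close to the other robot's settled position. To salvage a $\lambda=0$-based proof you would have to control the unconditioned $\lambda$ choices in the other robot's intermediate cycles, which destroys the clean per-attempt probability accounting.
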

\begin{proof}
    Let $\delta$ be the maximum distance between the two robots with $R_1$ at $\delta$ and $R_2$ at 0.
    Without loss of generality assume that $R_2$ is the robot which moves later.
    An attempt is successful if $R_2$ chooses $\lambda = 1/2$.
    Because, $R_1$ moves first, it can be at any position  $x\in [0, \delta]$ when $R_2$ looks. $R_1$ can choose $\lambda = 1$ or 1/2 to move.
    As $R_2$ moves to the midpoint, it will go to $x/2 \in [0, \delta/2]$.
    \begin{figure}[h]\centering
        \includegraphics[width=\linewidth]{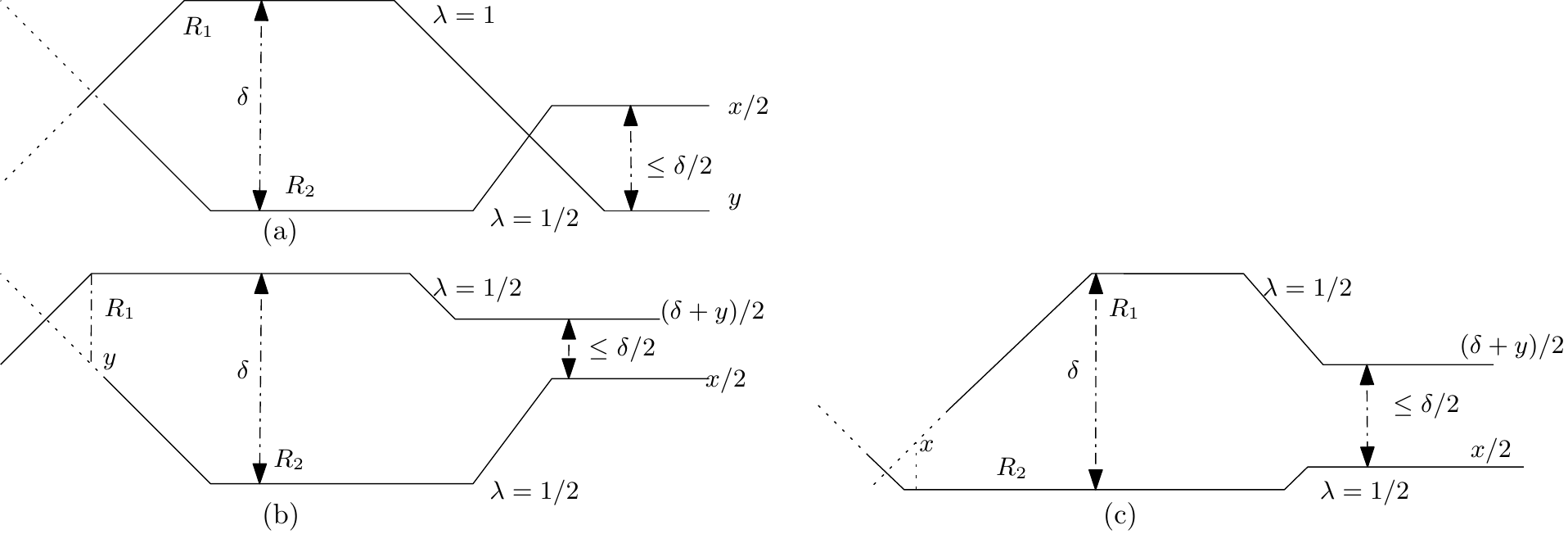}\caption{Occurrence of a phase when $R_2$ choses $\lambda = 1/2$ and moves later}\label{fig:phase}
    \end{figure}
    \begin{itemize}
        \item If $R_1$ has chosen $\lambda = 1$, then the robots would cross each other. The maximum distance between them is $x/2 \leq \delta/2$ if $R_1$ reaches 0 as shown in Fig.~\ref{fig:phase}(a).
        \item If $R_1$ has chosen $\lambda = 1/2$, then the robots may or may not cross each other. If the robots cross each other, it is similar to the previous case with $\lambda = 1$ for $R_1$. If the robots do not cross each other, there are two situations depending on which robot looks first. 
        \begin{itemize}
            \item If $R_1$ looks first, then it may find $R_2$ before $R_2$ reaches 0, i.e., before $R_2$ is at the farthest distance from $R_1$.
            Let $R_2$ be at $y$ when $R_1$ has looked. Now, $y \in [0,\delta)$. Since $R_1$ moves first, it finishes its movement first as it is moving only half the distance.
            Now, $R_2$ can observe $R_1$ at any position between $x \in[(\delta+y)/2,\delta]$. So, the next distance between them is $(\delta + y)/2 - x/2 \leq (\delta + y)/2 - (\delta+ y)/4 \leq \delta/2$ as shown in Fig.~\ref{fig:phase}(b).
            \item If $R_2$ looks first, then $y \in [0, x/2]$. The next distance between them is $(\delta + y)/2 - x/2 \leq \delta/2 - x/4 \leq \delta/2$ as shown in Fig.~\ref{fig:phase}(c).
        \end{itemize}
    \end{itemize}
    Hence, the maximum distance of $R_1$ from $R_2$ is bound by $\delta/2$.
    The probability for $R_1$ to choose $\lambda$ to be 1 or 1/2 is 2/3 and for $R_2$ to choose $\lambda = 1/2$ is 1/3. So the probability of a successful attempt is at least $2/3\times 1/3 = 2/9$.
\end{proof}

\begin{definition}[Phase]
    A \textit{phase} is a set of consecutive attempts until a successful attempt.
\end{definition}

\begin{lemma}\label{lem:looksinaphase}
    A \textit{phase} contains at most 18 looks in expectation. 
\end{lemma}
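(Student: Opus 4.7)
The plan is to view the number of attempts in a phase as a geometric random variable and combine this with a constant per-attempt bound on the number of looks. By Lemma~\ref{lem:phase}, each attempt is successful with probability at least $2/9$, and because the algorithm draws its $\lambda$ value independently (uniformly from $\{0,1/2,1\}$) in every cycle, successive attempts are independent trials. Therefore the number of attempts per phase, being the number of Bernoulli trials up to and including the first success, is stochastically dominated by a $\mathrm{Geometric}(2/9)$ random variable with expectation $9/2$.

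Next I would bound the number of looks contributed by a single attempt. By the definition of an attempt, the later-mover $R_1$ contributes exactly one cycle (hence one look), while the earlier-mover $R_2$ may complete several cycles within the attempt window, each contributing a look (only the latest $R_2$ look forms the formal pair of the attempt, but all looks still count toward the total). Using the hypothesis $\mathcal{W}+\mathcal{C}>\tau$, which lower-bounds each cycle length, together with careful bookkeeping of consecutive attempts, I would argue that each attempt contributes at most $4$ looks on average.

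Combining the two bounds gives an expected look count per phase of at most $4\cdot(9/2)=18$, which matches the lemma. The geometric step is routine; the delicate step is the per-attempt look count.

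The main obstacle is exactly this per-attempt bound. The adversary is free to inflate $R_1$'s wait time arbitrarily, so naively many $R_2$ cycles could crowd into a single attempt window. The cleanest workaround is an amortized charging scheme: either reinterpret extra $R_2$ cycles (whose looks are not the latest before $R_1$'s move) as belonging to their own attempts with $R_1$'s previous cycle, so that the geometric counting already accounts for them, or charge them against the bounded maximum distance at the start of the phase using the $\tau$ lower bound on cycle time. Either route forces the constant $4$ per attempt and delivers the target $18$.
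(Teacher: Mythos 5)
Your overall skeleton coincides with the paper's: decompose the looks in a phase as a sum over attempts, use Lemma~\ref{lem:phase} to bound the number of attempts by a geometric variable with mean $9/2$, bound the expected number of looks per attempt by $4$, and combine via Wald's equation. The first half of this is correct and is exactly what the paper does. The genuine gap is where you yourself flag it: you never establish the per-attempt bound of $4$, and neither of the two routes you sketch can deliver it. The hypothesis $\mathcal{W}+\mathcal{C}>\tau$ is only a \emph{lower} bound on cycle length, so it places no limit on how many short cycles of one robot the adversary can pack inside an arbitrarily long cycle of the other; ``charging against $\tau$'' therefore cannot yield a constant independent of the adversary (at best it yields something on the order of $\delta/\tau$, which is already the number of phases, not $4$ looks per attempt). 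Your alternative of reassigning the surplus looks to ``their own attempts'' silently changes the definition of an attempt on which the $2/9$ success probability of Lemma~\ref{lem:phase} rests, so the geometric count of attempts per phase would no longer be justified without reproving that lemma for the modified notion.

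The paper closes this step using the algorithm's randomness rather than the timing constraint: in an attempt, the later-moving robot contributes a single look, and the run of extra looks by the first-moving robot is terminated by the event that it draws $\lambda=1$; since each draw is an independent uniform choice from $\{0,1/2,1\}$, the number of such looks is geometric with mean $3$, so the expected number of looks per attempt is $1+3=4$ and Wald's equation gives $4\cdot\tfrac{9}{2}=18$. To repair your proof you need a stopping argument of this kind, driven by the $\lambda$-draws, in place of the $\tau$-based amortization.
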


\begin{proof}
    Let $X_i$ be a random variable which denotes the number of looks in a phase. Since a phase contains successive attempts until a successful attempt, let $Y_i$ be the random variable denoting the number of attempts in a phase.
    From Lemma~\ref{lem:phase}, each attempt has a success probability of 2/9. Each attempt is independent of the previous attempts. So the number of attempts before a successful attempt follows a geometric distribution. Thus, $E[Y_j] = 1/(2/9) = 9/2$.
    Let $Z$ be the random variable denoting the number of looks in an attempt. By definition, $Z \geq 2$. An attempt will finish if the robot which started first chooses $\lambda$ value to be 1. The number of looks before the robot chooses $\lambda = 1$ also follows a geometric distribution with expectation 3. So, the expected number of looks in an attempt is 4.
    The number of looks in a phase is, $X_i = \sum_{j = 1}^Z Y_j$.
    By Wald's Equation~\cite{wald1944}, $E[X_i] = E[Y_j]E[Z] = 4\times 9/2 = 18$.
\end{proof}

\begin{theorem}\label{thm:minimumTau}
    It is possible to gather two robots if the sum of wait time and computation delay is always greater than some value $\tau$ for both the robots having the same speed.
\end{theorem}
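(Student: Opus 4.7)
The plan is to iterate Lemma~\ref{lem:phase} until the maximum distance drops below $\tau$, and then exploit the lower bound $\mathcal{W}+\mathcal{C}>\tau$ to guarantee that one additional phase finishes the gathering with constant probability. By Lemma~\ref{lem:phase} the maximum distance halves with probability at least $2/9$ in every attempt, so after $k$ successful phases the maximum distance is at most $\delta/2^{k}$. Setting $k^{*}=\lceil\log_{2}(\delta/\tau)\rceil+1$, the maximum distance after $k^{*}$ successful phases is at most $\tau/2<\tau$. Lemma~\ref{lem:looksinaphase} bounds the expected number of looks per phase by $18$, so the expected number of looks spent driving the distance below $\tau$ is at most $18k^{*}=18(\log_{2}(\delta/\tau)+1)$, matching the count claimed in the introduction.

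The crucial step is showing that once the current distance $d$ is strictly less than $\tau$, a constant-probability event in the next attempt produces an actual gathering (distance zero confirmed by a look) rather than merely another halving. Without loss of generality let $R_1$ be the first of the two robots to look in this attempt, at time $t_{1}$, observing $R_2$ at distance $d<\tau$. Since $R_2$ is stationary throughout its wait-plus-computation window, whose total length $\mathcal{W}_2+\mathcal{C}_2>\tau>d$ exceeds $R_1$'s travel time, the situation is exactly the ``gap larger than distance'' configuration exploited in Fig.~\ref{fig:asyncicgamma}. If $R_1$ draws $\lambda=1$ and $R_2$ draws $\lambda=0$, which happens with joint probability $1/9$, then $R_1$ reaches $R_2$'s observed position at time $t_{1}+\mathcal{C}_{1}+d$, strictly before $R_2$'s idle window ends; consequently either $R_2$'s next look finds $R_1$ at its own position and triggers the gathering condition, or $R_2$ stays put so that $R_1$'s next look does. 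Combining with Lemma~\ref{lem:looksinaphase} and Wald's identity, the expected number of additional looks is $O(1)$, and the overall expected look count remains $18(\log_{2}(\delta/\tau)+1)$.

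The main obstacle is the bookkeeping in this last phase, because the oblivious scheduler can choose where $R_1$'s look falls inside $R_2$'s idle interval $[\,t_{2}^{\text{end}},\,t_{2}^{\text{end}}+\mathcal{W}_2+\mathcal{C}_2]$. I would split on whether $R_1$'s look precedes $R_2$'s look (so $R_2$ is still waiting) or follows it (so $R_2$ is in its compute delay having already committed to some $\lambda_{2}$), and in each subcase verify that one of the nine joint choices from $\{0,1/2,1\}^{2}$ forces a coincidence at one of the next two look instants. Once this final-phase lemma is in place, the theorem follows by chaining it to the halving recursion and invoking Wald's equation as in the proof of Lemma~\ref{lem:looksinaphase}.
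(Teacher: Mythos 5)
Your proposal follows essentially the same route as the paper's proof: drive the maximum distance below $\tau$ via the halving phases of Lemmas~\ref{lem:phase} and~\ref{lem:looksinaphase}, apply Wald's equation to get $18\log_2(\delta/\tau)$ expected looks, and then finish with the probability-$1/9$ event that one robot draws $\lambda=1$ while the other draws $\lambda=0$, which succeeds because the stationary robot's idle window exceeds $\tau>d$. The only difference is that you explicitly flag and begin the case analysis of where the first look falls inside the other robot's idle interval, a point the paper asserts without elaboration.
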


\begin{proof}
Let $T$ be a random variable which denotes the number of phases required such that the maximum distance between two robots is less than $\tau$.
As each phase reduces the distance between robots by at least half the maximum distance in the previous phase, we can say that $T \leq \log_2(\delta/\tau)$.
Let us estimate the total number of looks required for the maximum distance to become less than $\tau$. Each look corresponds to a choice of $\lambda$ value, which also signifies the random bit complexity of the process as a whole.
Let $X_i$ be a random variable which represents the number of looks in each phase. 
We have $E[X_i] = 18$ from Lemma~\ref{lem:looksinaphase}.
Let random variable $X$ denote the total number of looks required for maximum distance to be less than $\tau$. So,
\[
    X = \sum_{i=1}^T X_i
\]
Using Wald's Equation~\cite{wald1944}, we have 
\[
    E[X] = E[X_i]E[T] = 18 E[T] \leq 18\log_2(\delta/\tau)
\]

\begin{figure}[h]\centering
    \includegraphics[height=0.2\linewidth]{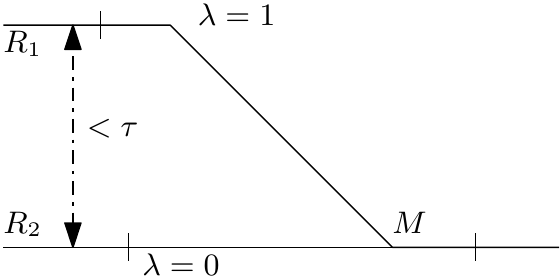}\caption{Gathering of two robots when distance between them is less than $\tau$.}\label{fig:taufinal}
\end{figure}
Next, we describe the probability of gathering after the maximum distance between robots in a phase is less than $\tau$. If the robot moving first chooses $\lambda = 1$ and the other chooses $\lambda = 0$ in the next attempt, that is sufficient for gathering as shown in Fig.~\ref{fig:taufinal}. 
The gathering happens definitely because whichever robot looks after the robot moving first has finished moving, would find the other at the same location and decide that they have gathered.
The probability of such an event is $1/9$. Hence, the expected number of attempts before the robots gather is $9$. The number of looks for such an event is 2. All other combinations of $\lambda$ values in two looks may not lead to gathering. So, the corresponding number of looks is $18$.
Expected number of looks required by both robots combined is at most $18(\log_2(\delta/\tau) + 1)$.
This concludes the proof.
\end{proof}
\begin{remark}
    Two robots can gather with two different values of $\lambda$ (e.g., 0 and 1), so the number of random bits required per cycle is 1, which is optimal.
\end{remark}

\section{Adaptive Adversary}\label{sec:adaptive}
An adaptive adversary is aware of all the outcome of random bits that appeared; consequently, it knows the destination of a robot in a cycle after the look instant.
The adversary decides the computation delay of the current cycle and wait time of the next cycle for a robot after the look instant.
\begin{figure}[h]\centering
    \includegraphics[height=0.2\linewidth]{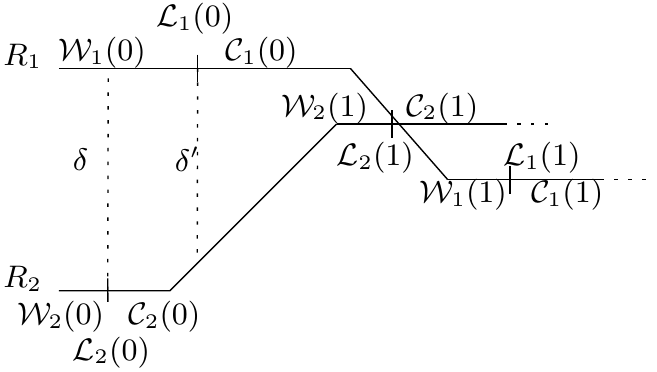}
    \caption{$R_1$ and $R_2$ look each other during movement}\label{fig:asyncAdaptive}
\end{figure}
\begin{theorem}\label{thm:adaptiveasync}
    It is impossible for two robots with the same speed to gather in \textit{ASYNC} model with an adaptive adversary.
\end{theorem}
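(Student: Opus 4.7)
The plan is to exhibit an explicit adaptive-adversary schedule under which no look of either robot ever finds the other at its own position, so that gathering is never detected. By Lemma~\ref{lem:2dto1d} it suffices to work in one dimension, with the robots starting at distinct positions $p_1 \neq p_2$. The adversary's leverage comes from two independent parameters per robot per cycle: the wait time $\mathcal{W}$, which it commits before the look, and the computation delay $\mathcal{C}$, which it commits after the look and therefore with knowledge of the just-drawn random bit.

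I would proceed cycle by cycle, maintaining the invariant that the two robots are at strictly positive distance at every look instant. At the beginning of a round, with both robots idle at distinct positions, the adversary lets $R_1$ look first. As soon as $R_1$'s random bit is drawn, yielding $\lambda_1$ and destination $d_1$, the adversary sets $\mathcal{C}_1 = 0$ so that $R_1$ starts moving at once, and chooses $\mathcal{W}_2$ so that $R_2$'s look falls \emph{inside} $R_1$'s travel window, with $R_1$ observed at a position $p^\ast \in (p_1, d_1)$ that the adversary selects; this is the basic configuration depicted in Figure~\ref{fig:asyncAdaptive}. Once $R_2$'s random bit produces $\lambda_2$ and destination $d_2 = p_2 + \lambda_2 (p^\ast - p_2)$, the adversary—now aware of both random bits—uses $\mathcal{C}_2$ together with the next-cycle wait times to close out the round. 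In the generic sub-case $d_1 \neq d_2$, the two robots finish their moves at distinct positions and the invariant passes trivially to the next round. In the dangerous sub-case $d_1 = d_2$, the adversary inflates $\mathcal{C}_2$ so that $R_1$ arrives at $d_1$ strictly before $R_2$ begins moving, sets $\mathcal{W}_1(k+1)=0$ so that $R_1$ looks immediately on arrival while $R_2$ is still at $p_2$, and schedules $R_1$'s subsequent move to leave $d_1$ before $R_2$'s transit can complete; the round then reduces to a fresh instance with positions $(d_1', p_2)$ in which the invariant is restored.

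The main obstacle is to verify that this manoeuvring succeeds against every realization of the random bits, not merely against a probability-one event. Each coincidence condition the adversary must rule out—equality of the two destinations, simultaneous arrival at the same point, or $R_1$'s next destination coinciding with $R_2$'s current position—reduces to a single affine equality in the adversary's real-valued control parameters $p^\ast$, $\mathcal{C}_2$, and the subsequent wait times, so its solution set is nowhere dense inside the nondegenerate intervals from which the adversary draws its parameters. Picking the parameters in the complement preserves the invariant; where a parameter must be committed before the next random bit is drawn, the continuous variation of $p^\ast$ shifts the forbidden value of $\lambda_2$, namely $(d_1-p_2)/(p^\ast - p_2)$, so as to miss the algorithm's support. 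Iterating the construction indefinitely keeps the gap between the two robots strictly positive at every look instant, so the robots never gather against this adversary, establishing the claimed impossibility.
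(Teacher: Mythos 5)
Your overall strategy---an adaptive adversary that exploits its knowledge of each drawn $\lambda$ to time the looks so that no robot ever observes the other at its own position---is the same as the paper's, and your basic configuration is exactly that of Fig.~\ref{fig:asyncAdaptive}. But your round structure discards the one property the paper's schedule is designed to preserve, and this opens a genuine hole. In your rounds $R_1$ always looks while $R_2$ is \emph{idle}, so $R_1$ observes $R_2$'s true resting position $p_2$, and its destination $d_1=p_1+\lambda_1(p_2-p_1)$ is entirely outside the adversary's control. If $\lambda_1=1$ (which lies in the support of, e.g., the paper's own algorithms), then $d_1=p_2$: $R_1$ is headed exactly to where $R_2$ sits. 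If in addition $\lambda_2=0$, then $d_2=p_2=d_1$, and your prescribed treatment of the ``dangerous sub-case $d_1=d_2$''---let $R_1$ arrive first and look immediately ``while $R_2$ is still at $p_2$''---\emph{causes} gathering: $R_1$ looks at $d_1=p_2$, finds $R_2$ at its own position, and decides it has gathered. Your closing ``affine equality / nowhere dense'' argument cannot repair this, because the offending pair $(\lambda_1,\lambda_2)=(1,0)$ is a fixed element of the algorithm's (typically discrete) support and not a function of any continuously variable adversary parameter: timing perturbations only move the observed position of a \emph{moving} robot, and here $R_2$ is observed idle, so there is no knob to turn. The paper avoids this trap by enforcing the stronger invariant that, after the single initial look, \emph{every} look finds the other robot in the move state; every observed position is then stale and continuously tunable via the computation delays, which is exactly what makes the perturbation argument you sketch in your last paragraph go through.

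A second, repairable slip concerns which parameter carries the adaptation. You have the adversary choose $\mathcal{W}_2$ for the current cycle \emph{after} seeing $\lambda_1$, but in the model the wait time of a cycle is committed at the preceding look, hence before the other robot's current coin is flipped. The paper instead lets the look instants be pre-committed and tunes the \emph{computation delays} (which are chosen after the relevant $\lambda$ is revealed) so as to slide the movement windows over those instants; your construction should be rephrased the same way, and it must also say what happens when $\lambda_1=0$ leaves $R_1$'s travel window empty.
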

\begin{proof}
    The adaptive adversary can create a situation such that at each look of a robot, it looks at the other robot in move state.
    Suppose, $R_1$ is looking and $R_2$ is in move state, based on $R_1$'s position $R_2$ decides to move to some position. After $R_2$ starts moving, $R_1$ looks. And this cycle continues perpetually.

    Suppose, the adversary has chosen $\mathcal{W}_1(0) > \mathcal{W}_2(0)$.
    As shown in Fig.~\ref{fig:asyncAdaptive}, $R_2$ looks first at $\mathcal{L}_2(0)$, when $R_1$ is idle. So, $R_2$ decides the value of $\lambda_2$ at $\mathcal{L}_2(0)$. Since the adversary is adaptive, it has knowledge of $\lambda_2$. Now, the adversary decides the value of $\mathcal{C}_2(0)$ and $\mathcal{W}_2(1)$ such that
    \[
        \mathcal{W}_2(0) + \mathcal{C}_2(0) < \mathcal{W}_1(0)
    \]
    Note that, adversary can choose $\mathcal{W}_2(1)$ to be any non-negative real number.
    Next, $R_1$ looks at $\mathcal{L}_1(0)$ and $R_2$ is in move state. Since, the adversary already knows the time at which $R_2$ looks next time, i.e., $\mathcal{L}_2(1)$, it can choose $\mathcal{C}_1(0)$ such that
    \begin{align*}
        \mathcal{W}_2(0) + \mathcal{C}_2(0) + \lambda_2\delta + \mathcal{W}_2(1) > \mathcal{W}_1(0) + \mathcal{C}_1(0)    \text{ and }\\
        \mathcal{W}_2(0) + \mathcal{C}_2(0) + \lambda_2\delta + \mathcal{W}_2(1) < \mathcal{W}_1(0) + \mathcal{C}_1(0) + \lambda_1\delta'
    \end{align*}
    where, $\delta'$ is the distance of $R_2$ from $R_1$ at $\mathcal{L}_2(0)$.
    Continuing in this manner, the adversary can guarantee that at each look apart from $\mathcal{L}_2(0)$ each robot finds the other in move state.

    Also, as a special case, if $\lambda$ chosen by a robot is 0, then the adversary can choose the next computation delay and wait time to be zero. Then the robot has to choose $\lambda$ value again. Hence, it is impossible for two robots to gather with an adaptive adversary.
\end{proof}

\section{Gathering multiple robots with merging}\label{sec:multirobot}
We define merging as once multiple robots are collocated, they will behave as a single entity thenceforth. With merging, we can reduce the multi-robot gathering problem to a two robot gathering problem for which the solution has been explored in the previous sections.
First, we describe Algorithm~\texttt{ReduceToLine}, which reduces the gathering problem in $\mathbb{R}^2$ to gathering on a line. Then we describe the reduction to a scenario where the robots are located at two distinct positions.

\subsection{Reduce to Line}
The robots are on a Euclidean plane. Determine the farthest pair of robots. If there are multiple pair of robots with the maximum distance, each activated robot moves $\lambda/100$ times the distance between them linearly outwards along the line joining them. Here 100 is chosen as an arbitrary constant. This movement makes one pair of robots as the unique maximum distance pair. The adversary can activate all the robots in the farthest pairs at the same time so that there can be multiple pairs of robots with the maximum distance in the next cycle. 
Specifically, if there are $k$ pairs of robots with the same distance and two different values of $\lambda$, then it would take $O(\log_2(k))$ rounds to arrive at a unique pair of robots which are the farthest distance from each other.
The line joining the maximum distance pair of robots is unique. Once the unique line is determined, each robot determines its projection on the line and moves towards it. Note that, moving towards the projection cannot make another pair of robots as the maximum distance pair, as the projected distance of a pair of robots is smaller compared to their original distance. Once the process is complete, the robots will form a line.

\subsection{With Merging}
If two robots merge and behave as a single robot, then we can deterministically achieve a two robot gathering scenario. The inner robots on the line do not move. Observe that, an intermediate robot can see other robots in two opposite directions. The outermost robots move towards the nearest inner robot, and this continues recursively. A robot can determine that it is the outermost robot if it sees other robots only in one direction. It is possible to achieve gathering directly from a situation with three distinct robot positions, say $A$, $B$, and $C$. It can happen if and only if the activation time of the robot at the middle position $B$ does not lie between the arrival time of $A$ at $B$ and $C$ at $B$. Otherwise, it becomes a two robot gathering problem.

\subsection{Discussion for gathering without merging}
If merging is not allowed, then the collocated robots behave as individual entities.
The process of forming a configuration with two distinct robot locations from a multi-robot configuration on the line is similar to the previous subsection.
The difficulty here lies in the fact that the configuration with two distinct robot locations can diverge further into a configuration with multiple robot locations. 
Once a subset of robots at some position are activated by the adversary, they start moving according to the configuration with two robot locations. 
While the robots are in movement, if other robots are activated subsequently, then the configuration becomes a configuration with multiple robot location for the robots activated later, and it will cause the robots activated latter to take decision based on a configuration with multiple robot location.

\section{Conclusion}\label{sec:conclusion}
This paper sheds light on the gathering of two robots with random bits. The adversary is characterized into two types, oblivious and adaptive. In case of an oblivious adversary, we have proved the possibility of gathering in the \textit{ASYNC$_{IC}$} model for a known speed ratio of the two robots.
With a lower bound on adversarial control, we also provide an estimate for the expected number of looks.
The approach of gathering has been generalized to include multiple robots with merging.
It remains to be investigated whether it is possible to gather in the general \textit{ASYNC} model with no restrictions on an oblivious adversary.
We also show that an adaptive adversary can always prevent the gathering of the robots in \textit{ASYNC}.

As future work, further characterization of choice of $\lambda$ values can be considered, and the analysis can be improved. One can also explore potential adversarial behaviors in the gathering of multiple robots without merging.
\bibliography{bib}
\end{document}